\numberwithin{equation}{section}
\newtheorem{theorem}{Theorem}
\newtheorem{algorithm}{Algorithm}
\newtheorem{proposition}{Proposition}
\newtheorem{lemma}{Lemma}
\theoremstyle{definition}
\newtheorem{assumption}{Assumption}
\newtheorem{remark}{Remark}
\renewcommand{\Pr}{\text{Pr}}
\renewcommand{\tilde}{\widetilde}
\newcommand{\Cov}[0]{\mathrm{Cov}}
\newcommand{\Var}[0]{\mathrm{Var}}
\newcommand{\E}[0]{\mathbb{E}}
\newcommand{\sumg}{\sum_{g=1}^G}
\newcommand{\sumi}{\sum_{i=1}^{N_g} }
\newcommand{\1}{\mathds{1}}
\newcommand{\Real}{\mathbb{R}}
\DeclareMathOperator{\argmin}{argmin}
\renewcommand{\hat}{\widehat}
\renewcommand{\tilde}{\widetilde}
\renewcommand{\bar}{\overline}
\renewcommand{\check}{\widecheck}
\begin{document}
\allowdisplaybreaks
\onehalfspacing
\title{Genuinely Robust Inference for Clustered Data\thanks{First arXiv date: August 23, 2023. This version: September 22, 2025. {\color{black}We owe special thanks to A. Colin Cameron, Tetsuya Kaji, and Kevin Song for their invaluable comments that significantly improved our methodology. We also benefited from discussions with Ivan Canay, Bruce Hansen, Michael Jansson, Seojeong (Jay) Lee, Michael Leung, James MacKinnon, Francesca Molinari, Ulrich M\text{\"u}ller, Jack Porter, Frank Schorfheide, Xiaoxia Shi, Max Tabord-Meehan, Tim Vogelsang, and seminar participants at numerous institutions and conferences.} We  thank Hong Xu and Siyuan Xu for their excellent research assistance. All remaining errors are ours.  
}}
\author{Harold D. Chiang\thanks{Assistant Professor of Economics, University of Wisconsin-Madison. \texttt{hdchiang@wisc.edu}}
\and
Yuya Sasaki\thanks{
Brian and Charlotte Grove Chair and Professor of Economics, Vanderbilt University. Email: \texttt{yuya.sasaki@vanderbilt.edu}} 
\and 
Yulong Wang\thanks{Associate Professor of Economics, Syracuse University. Email: \texttt{ywang402@syr.edu}}}
\date{}
\maketitle

\begin{abstract}
\setlength{\baselineskip}{6.67mm}
Conventional cluster-robust inference can be invalid  when data contain clusters of unignorably large size. We formalize this issue by deriving a necessary and sufficient condition for its validity, and show that this condition is frequently violated in practice: specifications from 77\% of empirical research articles in \textit{American Economic Review} and \textit{Econometrica} during 2020–2021 appear not to meet it. To address this limitation, we propose a genuinely robust inference procedure based on a new cluster score bootstrap. We establish its validity and size control across broad classes of data-generating processes where conventional methods break down. Simulation studies corroborate our theoretical findings, and empirical applications illustrate that employing the proposed method can substantially alter conventional statistical conclusions.

{\small { \ \ \newline
\textbf{Keywords: } cluster-robust inference, cluster score bootstrap, unignorably large cluster, domain of attraction, extreme value theory}
\\
\textbf{JEL Code: } C12, C18, C46}
\end{abstract}

\setlength{\baselineskip}{7.45mm}
\newpage
\sloppy

%%%%%%%%%%%%%%%%%%%%%%%%%%%%
\section{Introduction}\label{sec:introduction}
%%%%%%%%%%%%%%%%%%%%%%%%%%%%

Cluster-robust (CR) standard errors are designed to account for within-cluster correlations. 
Such correlations often arise by construction, for example, within an industry \citep{He1998} or within a state \citep{BeDuMu2004}. 
Today, even when a model does not inherently induce cluster dependence, the application of CR methods using observable group identifiers has become a common practice.

The foundational theory \citep*[][]{Wh84,LiZe86,Ar87} for CR inference methods assumes small cluster sizes $N_g$ (uniformly bounded above by $\overline{N} < \infty$) with a large number of clusters, $G \rightarrow \infty$, where $N_g$ denotes the number of entities in the $g$-th cluster for $g \in \{1,2,\dots,G\}$. 
Procedures based on this theory are implemented through the `\texttt{cluster()}' and `\texttt{vce(cluster)}' options in Stata, and they are utilized in nearly all, if not all, empirical studies that report CR standard errors.

It has been recognized that large cluster sizes $N_g$ can result in inflated CR standard errors \citep*[e.g.,][p. 324]{CaMi15}. Recent theoretical advancements \citep[][]{carter2017asymptotic,DjMaNi19,HaLe19,Ha22,BuCaShTa2022} accommodate larger cluster sizes $N_g$, eliminating the requirement that $N_g \leq \overline{N}$ and thereby broadening the applicability of the `\texttt{cluster()}' and `\texttt{vce(cluster)}' options, among others.
With this said, they still impose the restriction $\max_g N^2_g / N \rightarrow 0$ of vanishing maximum cluster size relative to the square root of the whole sample size $N = \sum_{g=1}^G N_g$ as $G \rightarrow \infty$. 

A natural question is whether the relaxed condition $\max_g N_g^2 / N \rightarrow 0$ accommodates a wide range of data sets. To answer this, we analyze empirical papers published in top journals.\footnote{We studied all articles published in \textit{American Economic Review} and \textit{Econometrica} between 2020 and 2021. Among them, we extracted a list of papers reporting estimation and inference results based on regressions, IV regressions, and their variants. Furthermore, we focus on articles using publicly available data sets for replication. See Section \ref{sec:fragility} for further details of this study.} All of these articles employ the aforementioned Stata options for CR standard errors, thereby implicitly assuming $\max_g N_g^2 / N \rightarrow 0$. Table \ref{tab:bins} summarizes the number of articles with $\max_g N_g^2 / N$ falling into each bin on a logarithmic scale. Notably, 55 percent (respectively, 39, 29, and 16 percent) of the articles use data sets where $\max_g N_g^2 / N \geq 1$ (respectively, $\geq 10$, $\geq 100$, and $\geq 1000$). In other words, the condition $\max_g N_g^2 / N \rightarrow 0$, required for the validity of conventional CR inference, may not hold for a nontrivial portion of these published articles.

\begin{table}[t]
\centering
\begin{tabular}{lcccccc}
\multicolumn{7}{c}{The Empirical Distribution of $\max_g N^2_g / N$ in Empirical Economic Research: 2020--2021}\\
\hline\hline
\multirow{2}{*}{$\max_g N^2_g / N$}&&0.1--&1--&10--&100--&$\geq$1000 \\ 
&$<$0.1&1&10&100&1000&\\
\hline
\textit{American Economic Review}& 4 & 8 & 4 & 1 & 3 & 1\\
\textit{Econometrica}            & 2 & 0 & 1 & 2 & 1 & 4\\
\multicolumn{1}{r}{Total}        & 6 & 8 & 5 & 3 & 4 & 5\\
& \ (19\%) \ & \ (26\%) \ & \ (15\%) \ & \ (10\%) \ & \ (13\%) \ & \ (16\%) \ \\
\hline\hline
\end{tabular}
\caption{Distribution of articles by the value of $\max_g N^2_g / N$, classified into the bins $[0,0.1)$, $[0.1,1)$, $[1,10)$, $[10,100)$, $[100,1000)$, and $[1000,\infty)$ on a logarithmic scale. The sample consists of papers published in the \textit{American Economic Review} and \textit{Econometrica} during 2020--2021 that report cluster-robust standard errors for regression or IV regression using publicly available replication data sets. For papers reporting multiple regressions, we record the largest value of $\max_g N^2_g / N$ across specifications.}
\label{tab:bins}
\end{table}

The condition $\max_g N_g^2 / N \to 0$ is sufficient but not necessary for asymptotic normality, implying that the adequacy of normality-based confidence intervals and tests cannot be evaluated solely by assessing the plausibility of this condition. To address this, we establish a necessary and sufficient condition for the validity of conventional cluster-robust (CR) inference -- see Theorem \ref{thm:iff}. Specifically, the limiting distribution is normal if and only if the score of the largest cluster is ignorable. When clusters are \textit{unignorably large}, regression estimates exhibit non-Gaussian limiting distributions, as illustrated in Figure \ref{figure:limit_distribution}.\footnote{Details on these non-Gaussian distributions are provided in Section \ref{sec:fragility}.} Using this characterization, formal statistical tests based on \citet{sasaki2023diagnostic} reject the null hypothesis of normality in 24 of the 31 papers (77 percent) reported in Table \ref{tab:bins}--see Table \ref{tab:test_alpha}.

\begin{figure}
\centering
\includegraphics[width=0.32\textwidth]{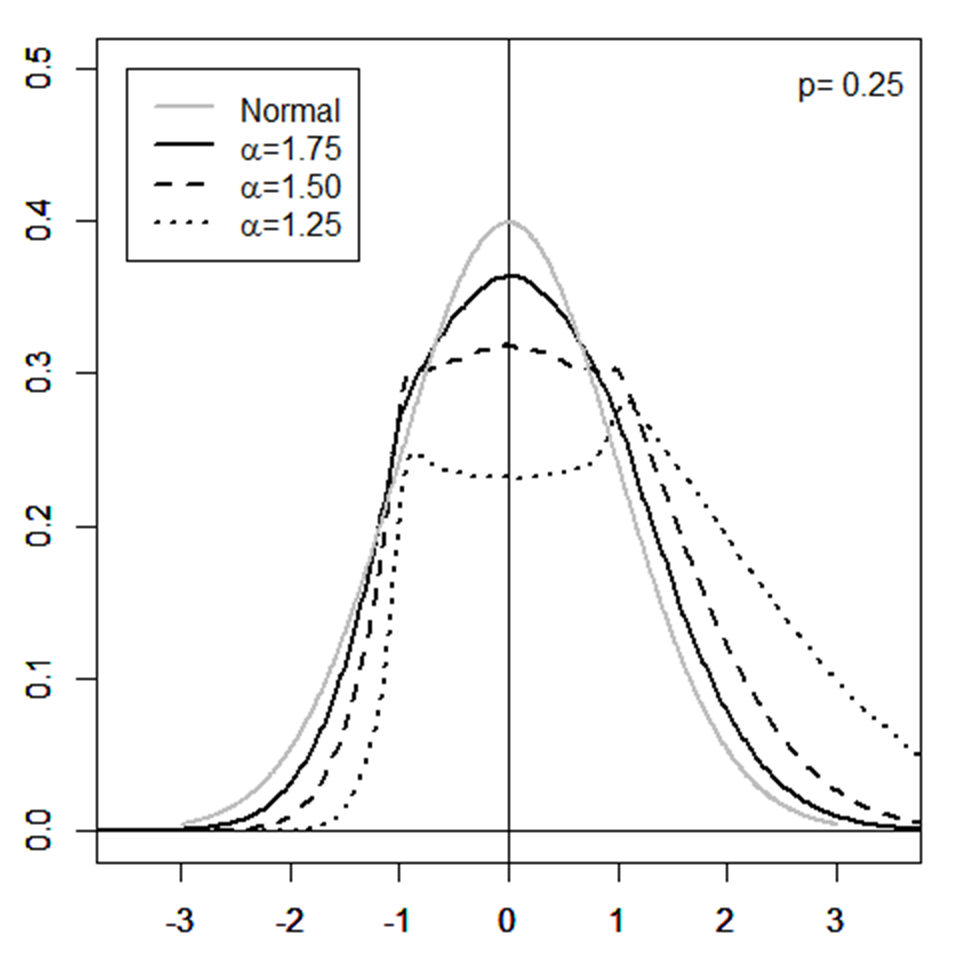}
\includegraphics[width=0.32\textwidth]{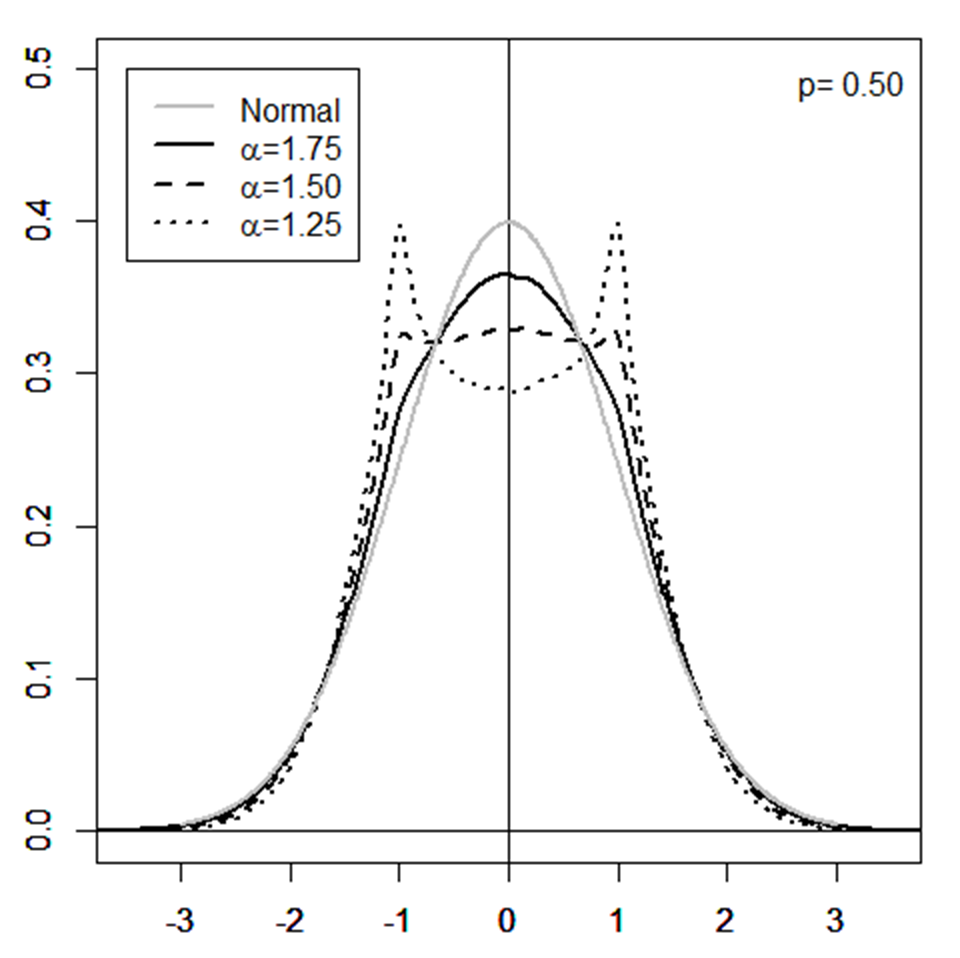}
\includegraphics[width=0.32\textwidth]{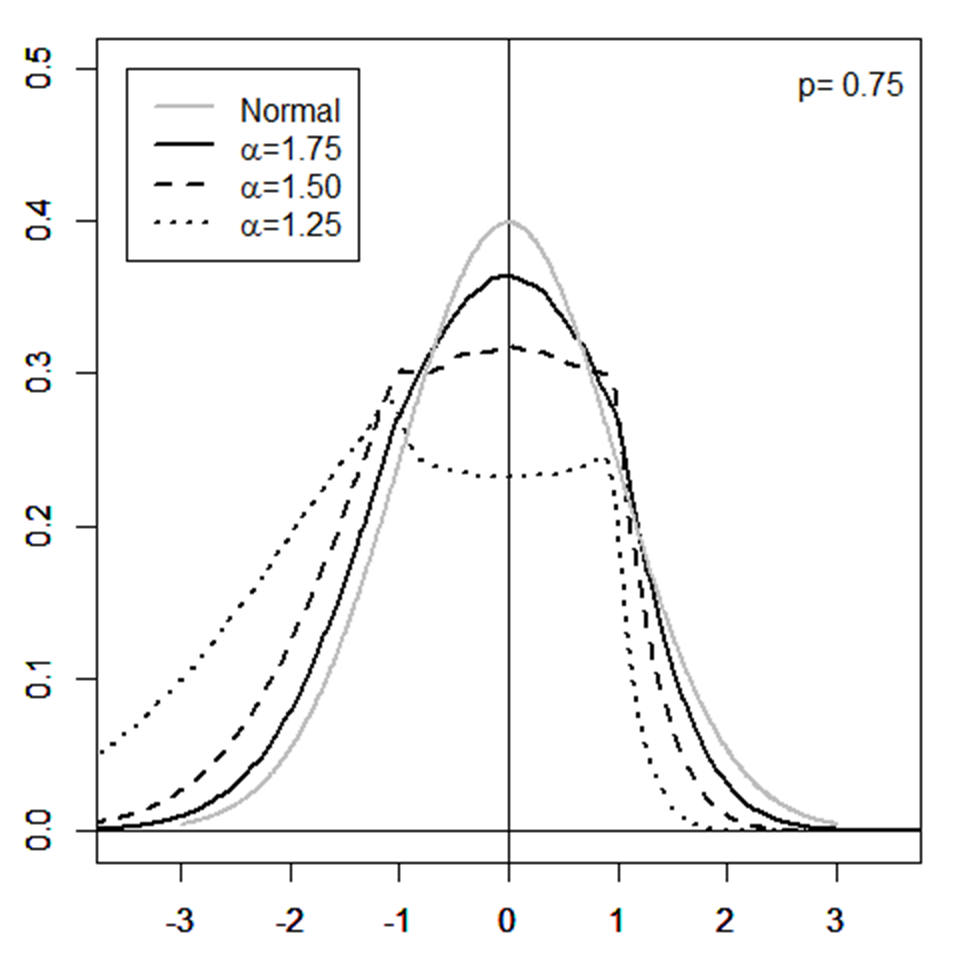}
\caption{Illustration of non-Gaussian limiting distributions arising from unignorably large clusters. The different distributional shapes, indexed by $\alpha$ and $p$, are described in Section~\ref{sec:fragility}.}
\label{figure:limit_distribution}
\end{figure}

Non-Gaussian limiting distributions invalidate conventional critical values, such as ``1.96,'' as well as bootstrap critical values. For example, using 1.96 results in sizes of 0.053, 0.087, and 0.250 (instead of the desired 0.050) when the nuisance parameter $\alpha$ equals 1.75, 1.50, and 1.25, respectively, as shown in Figure \ref{figure:limit_distribution}. The empirical bootstrap fails in these cases of infinite variance, and the widely used wild cluster bootstrap and pairs cluster bootstrap are also inconsistent.
Later, we formally establish these negative results as Proposition \ref{prop:boot}.

To address this issue, we introduce the cluster score (CS) bootstrap, a novel inferential method for clustered data that extends the $m$-out-of-$n$ and score bootstraps.
This method provides valid critical values adaptively across all limiting distributions depicted in Figure \ref{figure:limit_distribution}. 
In addition, we provide a data-driven choice of the tuning parameter and justify its theoretical validity.

\medskip

\noindent{\bf Relation to the Literature:}
The literature on cluster-robust inference has a long history, dating back to \citet{Wh84}, \citet{LiZe86}, and \citet{Ar87}. 
For a comprehensive review, we refer readers to \citet{CaMi15,cameronreview} and \citet{mackinnon2023cluster}.
More recently, sampling frameworks in which cluster sizes are treated as random variables have been investigated by \citet{bai2022inference}, \citet{BuCaShTa2022}, and \citet{cavaliere2022econometrics}.  
We adopt a model-based perspective with an increasing number of clusters and unrestricted intra-cluster dependence, or, asymptotically equivalently, a sampling-based perspective where the growing number of sampled clusters represents a negligible fraction of the superpopulation. This framework is well suited to the empirical contexts encountered in most applications.\footnote{
An alternative framework assumes a fixed number of clusters with growing cluster sizes, where asymptotic normality can be derived under additional assumptions of weak intra-cluster dependence, as in \citet{Ha07}, \citet{IbMu10}, \citet{canay2021wild}. \citet{ibragimov2016inference} and \citet{hansen2022jackknife} consider inference under gaussian assumptions. 
Another line of research advances the integration of design-based and sampling-based asymptotics, particularly under explicit treatment assignment schemes such as randomized experiments, as considered by \citet{AbAtImWo23}. Extending our method to this design-based framework is an avenue for future research.%See \citet{reichardt1999justifying} for an in-depth philosophical discussion on model-based versus design-based perspectives.
}

In an insightful recent work, \citet{kojevnikov2021some} establish an impossibility result for consistent estimation of the asymptotic variance when the sample contains a single large cluster under a triangular array setup. 
They further provide a necessary and sufficient condition on the cluster structure for the asymptotic variance to be consistently estimable. 
Our findings complement their result by showing that normal approximation for $t$-statistics fails in the presence of unignorably large clusters. 
Furthermore, our proposed procedure overcomes this limitation, as it does not rely on consistent variance estimation. 
We demonstrate that the self-normalized statistic converges in distribution and formally derive its limiting stable distribution in such settings. 
Importantly, the implementation of the bootstrap inference procedure does not require knowledge of the unknown rate or consistent variance estimation, owing to the self-normalizing nature of the test statistics.

The aspect of our paper that establishes non-Gaussianity under certain conditions connects to a branch of the econometrics literature exploring the possibility of non-Gaussian limiting distributions and, in some cases, impossibility results. 
For instance, \citet{hirano2012impossibility} show that regular estimation, and hence Gaussian asymptotics, are impossible for a class of estimators characterized by the maximum. 
More directly related, \citet{menzel2021bootstrap} highlight the potential non-Gaussianity of estimators under two-way clustering and establish an impossibility result on the uniform consistency of tests. 
In contrast, we demonstrate that non-Gaussianity can arise even under one-way clustering. 
Moreover, such negative results may be even more pervasive in this widely used empirical setting.

Our key distributional approximation results build on \citet{logan1973limit}, \citet{lepage1981convergence}, and \citet{gine1997student}. 
For theoretical foundations of probability and statistics with heavy-tailed distributions, we refer readers to \citet{resnick1987extreme,resnick2008extreme} and \cite{samorodnitsky1994stable}. Also see \cite{ibragimov2015heavy,chernozhukov2016extremal,chernozhukov2017extremal} for applications in economics and finance.
Our inference procedure builds on the resampling theory developed in \citet{arcones1989bootstrap,arcones1991additions} and \cite{bickel2008choice}. For  the related discussions on the inconsistency of empirical bootstrap for means of random variables with infinite variance, see, e.g., \citet{athreya1987bootstrap} and \citet{knight1989bootstrap}.

%\medskip
%\noindent{\bf Organization} 
%Section~\ref{sec:model} introduces the model and inference problem. 
%Section~\ref{sec:fragility} demonstrates the fragility of conventional methods, drawing on both formal theory and empirical examples. 
%Section~\ref{sec:bootstrap} presents the CS bootstrap, its related procedures, as well as their theoretical guarantees. 
%Section~\ref{sec:simulation} reports results from simulation studies, and 
%Section~\ref{sec:application} illustrates the proposed method with an empirical application. Additional theoretical discussions--including the implications of our asymptotic theory, inconsistency of the wild cluster bootstrap, considerations of uniformity, and the full mathematical details and proofs of the main theorems and supporting results--are presented in the appendices.

%%%%%%%%%%%%%%%%%%%%%%%%%%%%
\section{The Model}\label{sec:model}
%%%%%%%%%%%%%%%%%%%%%%%%%%%%

While the idea extends to a general class of econometric models, 
we consider the linear model\footnote{The assumption $\E[U_{gi}|X_g]=0$, while standard in the literature, is stronger than required for our asymptotic results. It can be relaxed to $\E[\sum_{i=1}^{N_g}X_{gi}U_{gi}]=0$.}
\begin{align*}
Y_{gi} = X_{gi}'\theta +U_{gi}, \qquad \E[U_{gi}|X_g]=0 \quad i=1,...,N_g,% HDC NOTE: \E[U_g|X_g] REQUIRED ONLY BY SACR METHODS 
\end{align*}
for ease of exposition as well as its popular use in practice,
where $X_g = (X_{g1},\ldots,X_{gN_g})'$, $U_g = (U_{g1},\ldots,U_{gN_g})'$, $g \in \{1,\ldots,G\}$ indexes clusters, and $N_g$ denotes the size of the $g$-th cluster. 
Define the OLS estimator and its cluster-robust (CR) variance estimator by
\begin{align}
&\hat \theta = \left(\sum_{g=1}^G \sum_{i=1}^{N_g}X_{gi}X_{gi}'\right)^{-1}\sum_{g=1}^G \sum_{i=1}^{N_g}X_{gi}Y_{gi}=\left(\sum_{g=1}^G X_g'X_g \right)^{-1} \sum_{g=1}^G  (X_g'X_g\theta + S_g)\quad\text{ and } \label{eq:ols} \\
&\hat V^{\text{CR}}=a_G \left(\sumg X_g'X_g\right)^{-1}\left(\sumg \hat S_g\hat S_g'\right)\left(\sumg X_g'X_g\right)^{-1}, \label{eq:cr}
\end{align}
respectively, for some finite sample adjustment factor $a_G$ such that $a_G \to 1$ as $G \to \infty$,
where $S_g = \sum_{i=1}^{N_g} X_{gi} U_{gi}$, $\hat S_g = \sum_{i=1}^{N_g} X_{gi} \hat U_{gi}$, and $\hat U_{gi}=Y_{gi}-X_{gi}'\hat\theta$. 
For simplicity of writing, we set $a_G=1$ throughout as it does not affect our asymptotic arguments.

Consider a linear transformation $\delta=r'\theta$ of the regression coefficient vector $\theta$, such that $r\in \Real^{\dim(\theta)}$ and $\|r\|=1$, as the parameter of interest. 
Let the corresponding estimator and its CR standard error be denoted by
\begin{align*}
\hat \delta =& \, r'\hat \theta 
\quad\text{ and }\\
\hat \sigma^2 =& \, r' \left(\sumg X_g'X_g\right)^{-1}\left(\sumg \hat S_g\hat S_g'\right)\left(\sumg X_g'X_g\right)^{-1} r,
\end{align*}
respectively.
We are interested in conducting inference for $\delta$ using the t-statistic
\begin{align}\label{eq:t_statistic}
\frac{(\hat \delta- \delta)}{\hat\sigma}=\frac{r'(\hat \theta-\theta)}{ \sqrt{r' \left(\sumg X_g'X_g\right)^{-1}\left(\sumg \hat S_g\hat S_g'\right)\left(\sumg X_g'X_g\right)^{-1} r}}
\end{align}
based on the CR standard error.

To state our assumption, we introduce a few definitions.
A random variable $\eta$ is said to be \textit{stable} if it has a domain of attraction in that there exists a sequence of i.i.d. random variables $\xi_1,\xi_2,\ldots$ and sequences of positive numbers $A_G$ and real numbers $D_G$ such that
\begin{align*}
	\frac{\sumg \xi_g - D_G}{A_G}\stackrel{d}{\to}\eta \qquad\text{as } G\to \infty.
\end{align*} 
A function $L(\cdot)$ is said to be \textit{slowly varying} at $\infty$ if $\lim_{t\to \infty}L(yt)/L(t)=1$ for all $y>0$.
If $\eta$ is stable, then $A_G$ takes the form of $G^{1/\alpha}L(G)$ for some $\alpha\in (0,2]$ and some slowly varying function $L(\cdot)$ at $\infty$ (cf. Proposition 2.2.13 in \citealt{embrechts1997modelling}).
If $\alpha \in (1,2]$, then $D_G$ can be chosen to be $G\cdot \E[\xi_g]$.
The number $\alpha$ is called the \textit{index of stability}, and $\eta$ is said to be \textit{$\alpha$-stable}.
In such a case, $\xi_g$ is said to belong to the \textit{domain of attraction} of an \textit{$\alpha$-stable distribution}.
Although this concept may look esoteric to some readers, it essentially states that a sum of i.i.d. random variables, after being suitably centered and normalized, converges in distribution to a limiting random variable, and it, in particular, encompasses the standard cases where central limit theorems (CLTs) hold.
In other words, econometricians and economists adopting the standard inference (e.g., the conventional critical value of 1.96) implicitly make this (and even stronger) assumption.
% We shall focus on $\alpha\in(1,2]$, since even the first moment of  $\xi_g$ would not be well-defined otherwise. 

\begin{assumption}\label{assm}
$(X_g'X_g,S_g)_{g=1}^G$ are i.i.d., follow a non-degenerate distribution, $\E[N_g]=c\in (0,\infty)$, and the design matrix satisfies
$G^{-1}\sumg X_g'X_g=Q + o_p(1)$
for a finite positive definite matrix $Q$.
For $v=r'Q^{-1}$ and for all $u_1,u_2\in \Real^{\dim(\theta)}$ with unit length, $v'S_g$ and $u_1'X_g'X_gu_2$ belong to the domain of attraction of stable laws with an index of stability $\alpha \in (1,2]$.
\end{assumption}

This assumption is arguably general. 
It is significantly weaker than requiring a central limit theorem to hold and even covers scenarios where the asymptotic normality fails. 
It also encompasses two notable cases frequently considered in economics and econometrics.

First, the case of $\alpha = 2$ encompasses the conventional setting in which $r'(\hat\theta - \theta)$ attains the standard convergence rate of $\sqrt{G}$ by the central limit theorem. 
In this case, the limiting $\alpha$-stable distribution is necessarily Gaussian \citep[cf.][Theorem~2]{Geluk2000}. 
This scenario also includes certain non-standard cases with a Gaussian limiting distribution but without finite variance, such as a Pareto random variable with Pareto exponent equal to $2$. 
The vast majority of econometric papers deriving asymptotic normality implicitly rely on this high-level condition in Assumption~\ref{assm}, or on even stronger ones.

Second, the case of $\alpha < 2$ entails a power law \citep[][Theorem~2.24]{pena2009self}, i.e.,
\begin{align}\label{eq:power}
P(|v'S_g| > t) &= t^{-\alpha} L_1(t), 
\qquad
P(|u_1 X'_g X_g u_2| > t) = t^{-\alpha} L_{2}(t),
\end{align}
for some slowly varying functions $L_1(\cdot)$ and $L_2(\cdot)$, where $L_2(\cdot)$ may depend on $u_1$ and $u_2$. 
In this case, the index $\alpha$ of stability coincides with the Pareto exponent\footnote{Specifically, the Pareto distribution has CDF $F(t) = 1 - t^{-\beta}$ for $t \geq 1$.} $\beta$, in the sense that $\alpha = \min\{\beta,2\}$. 
Thus, when $\alpha < 2$, the score has infinite variance. 
For more precise details, see Theorem~\ref{theorem:pena} in Appendix~\ref{sec:alternative}.  
In this case, \emph{unignorably large} clusters are indeed unignorable, since the sample sum of the (scaled) scores becomes asymptotically proportional to the (scaled) score of the largest cluster; see Remark~\ref{remark:impossibility} in Appendix~\ref{sec:theory_alpha_less_than_two} for further discussion. 
Hence, the asymptotic distribution cannot be Gaussian in this case.

%As discussed in the introduction, the literature on urban economics and economic geography has established theoretical results based on stochastic growth processes, showing that city sizes follow Zipf's law \citep[e.g.,][]{gabaix1999zipf}. 
%As discussed in the introduction, 

The literature in urban economics and economic geography establishes that (truncated) city size distributions frequently exhibit a power-law behavior in the upper tail, with estimated exponents typically in the range of $1$ to $1.5$; see \citet{eeckhout2004gibrat,ioannides2013us} and references therein. Consequently, when observations are strongly correlated at the city level, this implies $\alpha < 2$, yielding a non-Gaussian limiting distribution.

The i.i.d. requirement across clusters in Assumption~\ref{assm} is standard in this literature (e.g., \citealt{BuCaShTa2022,cavaliere2022econometrics,bai2022inference}). 
This assumption is mild because:  
(1) the conditional distributions of $S_g$ and $X'_g X_g$ given $N_g = n_g$ may vary across $n_g$; and  
(2) the distributions of individuals within each cluster need not be identical.  
Moreover, $S_g$ and $X_g$ may be arbitrarily correlated with the cluster size $N_g$, provided that the regression exogeneity condition is satisfied.

To simplify the exposition, we focus on the case where $v'S_g$ and $u_1' X_g' X_g u_2$ share a common stability index $\alpha$. 
This simplification is rationalized if the tail behavior of their distributions is driven by the tail behavior of the cluster-size distribution $N_g$; see Section~\ref{sec:fragility:heuristic} for an illustrative example. 
However, this simplification is adopted only for notational simplicity and can be relaxed at the cost of substantially more cumbersome exposition.

%%%%%%%%%%%%%%%%%%%%%%%%%%%%
\section{Fragility of the Conventional CR Methods}\label{sec:fragility}
%%%%%%%%%%%%%%%%%%%%%%%%%%%%
This section shows that conventional methods of cluster-robust (CR) inference are valid if and only if $\alpha = 2$. 
In other words, they necessarily fail when $\alpha < 2$.
We begin with heuristic discussions in Section~\ref{sec:fragility:heuristic} and then develop formal results in Section~\ref{sec:fragility:theory}. 
We further report how frequently cases with $\alpha < 2$ arise in empirical economic applications.

%%%%%%%%%%%%%%%%%%%%%%%%%%%%
\subsection{Heuristic Discussions}\label{sec:fragility:heuristic}
%%%%%%%%%%%%%%%%%%%%%%%%%%%%
The intuition behind the fragility of conventional CR methods is as follows. 
When $\alpha < 2$, the cluster size $N_g$ does not have a finite variance. 
If intra-cluster dependence is non-trivial, this infinite variance of $N_g$ is inherited by the score $S_g$, causing the CLT for OLS (and also other estimators) to fail. 
Cases with $\alpha < 2$ are quite plausible in empirical data, and we show that this is indeed the case for the majority of recent empirical papers published in \textit{Econometrica} and the \textit{American Economic Review}, as discussed in Section~\ref{sec:fragility:theory} in detail.

To provide a simple and transparent illustration, consider the sample average
\begin{equation*}
\hat\theta = \frac{1}{N} \sum_{g=1}^G \sum_{i=1}^{N_g} Y_{gi},
\end{equation*}
which is a special case of the OLS estimator \eqref{eq:ols} with $X_{gi} = 1$. 
The true parameter is the mean $\theta = \mathbb{E}[Y_{gi}]$, which is normalized to $\theta = 0$ without loss of generality. 
For clarity, suppose the extreme case of perfect intra-cluster dependence, i.e., $Y_{gi} \equiv Y_{g}$ for all $i \in \{1, \ldots, N_g\}$ within each cluster $g$. 
For simplicity, also assume that $N_g$ is independent of $Y_g$. 
These assumptions are made purely for expositional clarity and are not essential for our results.

In this case, we obtain
\begin{equation*}
\sqrt{N}\hat{\theta} 
= \frac{G^{-1/2}\sum_{g=1}^{G} N_{g} Y_{g}}
       {\sqrt{\tfrac{1}{G}\sum_{g=1}^{G} N_{g}}}.
\end{equation*}%
The denominator converges to $\sqrt{\mathbb{E}[N_{g}]}$ as long as $\alpha > 1$. 
For the numerator, note that 
$
\Var\!\left(G^{-1/2}\sum_{g=1}^{G} N_{g} Y_{g}\right) 
= \Var[N_g] \cdot \Var[Y_g],
$
which is infinite when $\alpha < 2$. 
Indeed, Theorem~1 of \citet{Geluk2000} implies that, if the distribution of $N_g Y_g$ is $\alpha$-stable, then the limiting distribution
\begin{equation*}
x \;\mapsto\; 
\lim_{G \to \infty} 
\Pr\!\left( \frac{1}{a_{G}} \sum_{g=1}^G N_g Y_g - b_G > x \right),
\end{equation*}
for some sequences $a_G \simeq G^{1/\alpha} \to \infty$ and $b_G \in \mathbb{R}$, 
has the characteristic function
\begin{equation*}
\psi_{\alpha}(s) 
= \exp \left\{ 
   -\Big( |s|^{\alpha} 
   + i s (1-\alpha) \tan(\alpha\pi/2) \tfrac{|s|^{\alpha - 1}-1}{\alpha - 1} 
   \Big) 
 \right\},
\end{equation*}
which differs from the Gaussian characteristic function. 
Thus, the CLT for $\hat{\theta}$ fails. 
Further discussion of this example is provided in Appendix~\ref{sec:appendix:self_normalized}.

In summary, the stability index $\alpha$ determines both the convergence rate and the limiting distribution. 
When intra-cluster correlation is non-trivial, the tail heaviness of $N_g$ carries over to that of $S_g$. 
Consequently, the $t$-ratio of the conventional CR method is asymptotically normal \textit{if and only if} $\alpha = 2$. While we currently present heuristic arguments in a simplified setting, we formalize and generalize this claim in Theorem~\ref{thm:iff} and Proposition~\ref{prop:boot} below.

%Second, when $\alpha\in(1,2)$, the convergence rate is slower than $\sqrt{G}$ and the asymptotic distribution is non-Gaussian. However, no method based on the quantile of the asymptotic distribution of the t-ratio of the OLS estimator can uniformly control size over $\alpha \in [1,2]$ as the t-ratio fails to converge in distribution to a limit at $\alpha = 1$. 

% Given the failure of CLT with $\xi \in (1/2,1)$, \cite{chiang2024inconsistency} propose a robust inference procedure based on subsampling. 
% They show that no method based on the quantile of the asymptotic distribution of the t-ratio of the OLS estimator can control size uniformly over $\xi \in [1/2,1]$ due to the fact that the t-ratio fails to coverge in distribution to a limit in general at $\xi = 1$; see \citet[][Proposition 1]{chiang2024inconsistency}. 
% Additionally, they show the inconsistency of wild pairs bootstrap and wild cluster bootstrap too. % is shown as in Appendix A.3 of \cite{chiang2024inconsistency}.  

%Third, an even more extreme case is when $\alpha<1$.  In such scenarios, the population problem of the OLS may not be well-defined as $\E[\|X_g'X_g\|]$ and $\E[\|S_g\|]$ fail to exist, leading to an identification failure for the OLS estimator. This echoes the conclusion in \cite{kojevnikov2021some}, who show the impossibility of consistent variance estimation under the presence of a single unignorably large cluster. This scenario corresponds to an arbitrarily small $\alpha$, reflecting extremely heavy tail of $N_g$.  

\begin{remark}[Bias from Trimming Large Clusters]
In practice, researchers may trim large clusters with $N_g > k$ for some threshold $k$. 
While such trimming may appear to mitigate problems arising from non-Gaussian limiting distributions induced by unignorably large clusters, it introduces bias and thereby undermines the validity of inference. 
Consider
\[
0 = \mathbb{E}\!\left[ \sum_{i=1}^{N_g} Y_{gi} \right] 
  = \mathbb{E}\!\left[ k Y_{g} \,\mathds{1}\{N_g \leq k\} \right] 
  + \mathbb{E}\!\left[ (N_g - k) Y_{g} \,\mathds{1}\{N_g > k\} \right] 
  =: \theta(k) + \lambda(k).
\]

Here, $\theta(k)$ represents the estimand of the trimmed procedure, while $\lambda(k)$ denotes the associated bias term. 
This bias $\lambda(k)$ is generally nonzero whenever the distribution of $Y_{g}$ depends on $N_g$. 
Hence, naively trimming large clusters can result in invalid inference.
$\blacktriangle$
\end{remark}

%%%%%%%%%%%%%%%%%%%%%%%%%%%%
\subsection{Formal Theory}\label{sec:fragility:theory}
%%%%%%%%%%%%%%%%%%%%%%%%%%%%

The following theorem formalizes and generalizes the discussion from the previous subsection.
\begin{theorem}[Necessary and sufficient condition]\label{thm:iff}
If Assumption \ref{assm} is satisfied for an $\alpha \in (1,2]$, then the t-statistic \eqref{eq:t_statistic} is asymptotically normal if and only if $\alpha=2$.
\end{theorem}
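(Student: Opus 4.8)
The plan is to reduce the t-statistic \eqref{eq:t_statistic} to a self-normalized sum of the scalar scores $W_g := v'S_g$ with $v = r'Q^{-1}$, and then invoke the known characterization of asymptotic normality for self-normalized sums. I would first linearize the numerator. Writing $\hat Q := \frac1G\sumg X_g'X_g = Q + o_p(1)$ and using $\hat\theta-\theta = \hat Q^{-1}\frac1G\sumg S_g$ from \eqref{eq:ols}, I would show
\begin{align*}
G\, r'(\hat\theta-\theta) = r'\hat Q^{-1}\sumg S_g = \sumg v'S_g + (r'\hat Q^{-1}-v')\sumg S_g = \sumg v'S_g + o_p(A_G),
\end{align*}
where $A_G \asymp G^{1/\alpha}L(G)$ is the stable norming sequence. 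The remainder is $o_p(A_G)$ since $r'\hat Q^{-1}-v' = o_p(1)$ while each coordinate of $\sumg S_g$ is $O_p(A_G)$ under Assumption \ref{assm}; the exogeneity $\E[U_g|X_g]=0$ gives $\E[S_g]=0$, so for $\alpha\in(1,2]$ no centering term $D_G$ is needed.

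Next I would handle the denominator. Because $(\sumg X_g'X_g)^{-1} = \frac1G\hat Q^{-1}$, the CR standard error satisfies $G^2\hat\sigma^2 = \sumg (r'\hat Q^{-1}\hat S_g)^2$. Substituting $\hat S_g = S_g - (X_g'X_g)(\hat\theta-\theta)$ and $r'\hat Q^{-1}=v'+o_p(1)$, I would argue that the plug-in and design-matrix corrections are negligible relative to the leading term, so that $G^2\hat\sigma^2 = \sumg (v'S_g)^2\,(1+o_p(1))$. The bookkeeping is by tail indices: $(v'S_g)^2$ has index $\alpha/2$, hence $\sumg (v'S_g)^2 = O_p(A_G^2)$; meanwhile $\|\hat\theta-\theta\| = O_p(G^{1/\alpha-1}L(G))$ and the matrix products $X_g'X_g\cdot X_g'X_g$ also carry index $\alpha/2$, so the quadratic correction $\sumg [v'(X_g'X_g)(\hat\theta-\theta)]^2 = O_p(G^{4/\alpha-2}L(G)^4)$, whose ratio to the leading term is $G^{2/\alpha-2}\to0$ for $\alpha\le 2$, while the cross term is controlled by Cauchy--Schwarz at rate $G^{1/\alpha-1}\to0$ for $\alpha>1$.

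Combining the two reductions by Slutsky, the t-statistic satisfies
\begin{align*}
\frac{\hat\delta-\delta}{\hat\sigma} = \frac{\sumg v'S_g}{\sqrt{\sumg (v'S_g)^2}} + o_p(1),
\end{align*}
i.e. it is asymptotically equivalent to the self-normalized sum of $W_g = v'S_g$. I would then apply the characterization of \citet{gine1997student} (with \citealt{logan1973limit, lepage1981convergence}): for i.i.d. mean-zero $W_g$, the self-normalized sum converges to $N(0,1)$ if and only if $W_g$ lies in the domain of attraction of the normal law. Under Assumption \ref{assm}, $W_g = v'S_g$ is in the domain of attraction of an $\alpha$-stable law, which coincides with the domain of attraction of the normal law exactly when $\alpha=2$; for $\alpha<2$ the power-law tail \eqref{eq:power} (cf. Theorem \ref{theorem:pena}) forces infinite variance and a genuinely non-normal stable limit, so the self-normalized sum converges to a non-degenerate non-normal law. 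This delivers both directions of the ``if and only if.''

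The main obstacle is the denominator reduction. Since the design matrices $X_g'X_g$ are themselves heavy-tailed with the same index $\alpha$, one cannot simply appeal to a law of large numbers for the sandwich term; instead one must verify, through careful tail-index accounting, that replacing $\hat S_g$ by $S_g$ and $\hat Q^{-1}$ by $Q^{-1}$ leaves the leading $\alpha/2$-stable term $\sumg (v'S_g)^2$ intact and relegates the residual and matrix-product terms to strictly smaller stochastic order. Once the self-normalized structure has been isolated, the equivalence with $\alpha=2$ follows directly from the cited self-normalized limit theory.
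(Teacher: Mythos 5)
Your proposal is correct and follows essentially the same route as the paper: reduce the t-statistic to the self-normalized sum of the scores by showing that $\sumg \hat S_g^2$ and the Gram-matrix factors can be replaced by $\sumg S_g^2$ via the LePage--Woodroofe--Zinn rates plus Cauchy--Schwarz, and then invoke the Gin\'e--G\"otze--Mason characterization (their Theorem 3.4) that self-normalized sums are asymptotically standard normal exactly when the summands lie in the domain of attraction of the normal law, which under Assumption \ref{assm} happens precisely at $\alpha=2$. The only notable difference is that the paper carries out the equivalence argument after a without-loss-of-generality reduction to the scalar case, in which the factors $\left(\sumg X_g'X_g\right)^{-1}$ cancel exactly between numerator and denominator, so it never needs your intermediate claim that every coordinate of $\sumg S_g$ is $O_p(A_G)$ --- a claim that Assumption \ref{assm} does not literally supply, since it controls $v'S_g$ only for the particular direction $v=r'Q^{-1}$ (while controlling $u_1'X_g'X_gu_2$ for all directions); your vector-case bookkeeping would need this strengthening, or the same scalar reduction, to be fully airtight.
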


\noindent
A proof is provided in Appendix~\ref{sec:proof:thm:iff}. 

The theorem implies that conventional CR inference based on common variance estimators, such as CR1, CR2, CR3, and the jackknife, together with normal critical values (e.g., $\approx 1.96$ for the 97.5th percentile) fails whenever $\alpha < 2$.

With Theorem~\ref{thm:iff}, we now characterize the curves shown in Figure~\ref{figure:limit_distribution} from Section \ref{sec:introduction}. 
The left, middle, and right panels of Figure~\ref{figure:limit_distribution} display the limiting distributions of the $t$-statistic under $p = 0.25$, $0.50$, and $0.75$, respectively, where 
\begin{align}
p = \lim_{t \to \infty} \frac{P\!\left(v' S_g > t\right)}{P\!\left(\lvert v' S_g \rvert > t\right)}, \label{eq:tail_balancing}
\end{align}
for $v$ as given in Assumption~\ref{assm}, measures the limiting asymmetry of tail probabilities. 
Each panel in Figure~\ref{figure:limit_distribution} depicts three non-Gaussian limiting distributions corresponding to $\alpha = 1.25$, $1.50$, and $1.75$ with distinct line styles, together with the normal reference case ($\alpha = 2.00$). 
The key takeaway is that conventional methods of CR inference, which rely on normal approximation, become increasingly size-distorted as $\alpha$ decreases and as $p$ deviates from $0.5$.

Another class of conventional approaches consists of cluster bootstraps. 
Two main bootstrap-based CR inference methods are commonly used in the literature: the pairs cluster bootstrap and the wild cluster bootstrap \citep{cameron2008bootstrap}. 
It is well established that the empirical bootstrap is inconsistent when the variance of the score is infinite \citep[cf.][]{athreya1987bootstrap,knight1989bootstrap}. 
In light of the power-law characterization \eqref{eq:power}, the pairs cluster bootstrap, essentially the empirical bootstrap applied to cluster-wise sums treated as independent units, is inconsistent under Assumption~\ref{assm} with $\alpha < 2$. 
Moreover, Theorem~\ref{theorem:bootstrap} in Appendix~\ref{sec:inconsistent_wcb} shows that the wild cluster bootstrap is likewise inconsistent under Assumption~\ref{assm} with $\alpha < 2$. 
The following proposition summarizes these results.

\begin{proposition}[Failure of the Conventional Cluster Bootstraps]\label{prop:boot}
If Assumption \ref{assm} is satisfied for an $\alpha<2$, then the pairs cluster bootstrap and the wild bootstrap methods are both inconsistent. 
\end{proposition}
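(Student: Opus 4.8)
The plan is to reduce both bootstrap procedures to statements about a self-normalized sum of the cluster scores and then exploit the power-law characterization \eqref{eq:power} with $\alpha<2$, under which such sums are asymptotically dominated by the single largest cluster. Throughout I would work with the scalar projection $v'S_g$, where $v=r'Q^{-1}$ as in Assumption \ref{assm}, since the proof of Theorem \ref{thm:iff} already shows that the studentized statistic \eqref{eq:t_statistic} is asymptotically equivalent to the self-normalized sum $\sum_g v'S_g \big/ (\sum_g (v'\hat S_g)^2)^{1/2}$ up to terms that vanish after normalization (the outer matrices $(\sumg X_g'X_g)^{-1}$ contribute $Q^{-1}/G$ to leading order). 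Consistency of either bootstrap requires its conditional (given the data) law to converge in probability to the $\alpha$-stable limit of this self-normalized sum; I would establish inconsistency by showing the conditional law fails to stabilize at the correct target.

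For the pairs cluster bootstrap, I would first observe that resampling clusters with replacement is exactly the empirical (multinomial) bootstrap applied to the i.i.d. cluster-level units $(X_g'X_g, S_g)$, so that in projection the bootstrap numerator is driven by the reweighted sum $\sum_g M_g\, v'S_g$ (recentered at the original $\sum_g v'S_g$) and the bootstrap denominator is built from $\sum_g M_g (v'\hat S_g)^2$, where $(M_1,\dots,M_G)$ is $\mathrm{multinomial}(G; 1/G,\dots,1/G)$. Under \eqref{eq:power} with $\alpha<2$, the sum $\sum_g (v'S_g)^2$ is dominated by the few largest order statistics of $|v'S_g|$, whose relative magnitudes are random and do not concentrate, via a LePage-type representation (cf. \citet{lepage1981convergence}). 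The multinomial counts $M_g$ attached to these dominant terms converge to asymptotically independent Poisson variables, so the conditional bootstrap law is a genuinely random mixture that does not converge to a single deterministic distribution; this is the classical failure of the empirical bootstrap for infinite-variance means established by \citet{athreya1987bootstrap}, \citet{knight1989bootstrap}, and \citet{arcones1989bootstrap}, and self-normalization does not repair it because the same dominant order statistics drive the denominator. Hence the pairs cluster bootstrap is inconsistent.

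For the wild cluster bootstrap, the bootstrap scores are $S_g^\ast = w_g \hat S_g$ for i.i.d. auxiliary weights $w_g$ (e.g., Rademacher), so the bootstrap self-normalized statistic is, in projection, $\sum_g w_g\, v'\hat S_g \big/ (\sum_g w_g^2 (v'\hat S_g)^2)^{1/2}$. Conditioning on the data freezes the $v'\hat S_g$, and since $\alpha<2$ forces the ratio $(v'\hat S_{g^\ast})^2 \big/ \sum_g (v'\hat S_g)^2$ to converge to a strictly positive random variable, where $g^\ast$ indexes the largest cluster, both numerator and denominator are dominated by the single term $w_{g^\ast} v'\hat S_{g^\ast}$. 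The conditional law therefore collapses toward one determined by the single weight $w_{g^\ast}$—for Rademacher weights a near two-point law on $\{\pm 1\}$—rather than the continuous $\alpha$-stable target, which establishes inconsistency; I would carry out the precise accounting of the contribution of the largest cluster in Theorem \ref{theorem:bootstrap}.

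The main obstacle is the wild-bootstrap step, and within it the self-normalization for both procedures: I must show that studentizing does not rescue consistency. The crux is a careful LePage/order-statistics analysis establishing that, when $\alpha<2$, a fixed finite number of largest clusters carry a nonvanishing fraction of $\sum_g (v'\hat S_g)^2$, so that the conditional bootstrap distribution is controlled by finitely many random magnitudes and their random multipliers. Ensuring these statements hold with the estimated scores $\hat S_g$ in place of the population scores $S_g$—i.e., that the OLS residual adjustment does not disturb the dominant-cluster behavior—is the remaining technical point to verify.
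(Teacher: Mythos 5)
Your treatment of the pairs cluster bootstrap is correct and is essentially the paper's own argument: the pairs bootstrap is the empirical bootstrap applied to the i.i.d.\ cluster-level sums, so under the power-law characterization \eqref{eq:power} with $\alpha<2$ inconsistency follows from the classical results of \citet{athreya1987bootstrap}, \citet{knight1989bootstrap}, and \citet{arcones1989bootstrap}. The wild-bootstrap half, however, contains a genuine error in the stated mechanism. You claim that $\alpha<2$ forces ``both numerator and denominator'' to be ``dominated by the single term $w_{g^\ast} v'\hat S_{g^\ast}$,'' so that the conditional law collapses to a near two-point law on $\{\pm 1\}$. That is false for $\alpha\in(1,2)$. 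The LePage representation gives $A_G^{-2}\sum_g (v'\hat S_g)^2 \stackrel{d}{\to} \sum_{k=1}^\infty Z_k^2$ with $Z_k=(E_1+\cdots+E_k)^{-1/\alpha}$, and the largest cluster's share $Z_1^2\big/\sum_{k\ge 1} Z_k^2$ is strictly positive but also strictly less than one almost surely; every order statistic contributes a non-vanishing share. Consequently the conditional bootstrap law does not degenerate: as the paper's Theorem \ref{theorem:bootstrap} shows, it converges (with respect to the bootstrap measure, for almost every data sequence) to the law of $\sum_{k=1}^\infty \epsilon_k Z_k v_k^* \big/ \sqrt{\sum_{k=1}^\infty Z_k^2}$ given the frozen magnitudes and signs $(Z,\epsilon)$ --- a nondegenerate infinite mixture over the auxiliary weights, not a two-point law. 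An argument premised on total single-cluster dominance therefore fails as stated.

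The repair --- and this is what the paper's proof of Theorem \ref{theorem:bootstrap} actually does --- is to run the LePage/Skorohod analysis you sketch but draw the opposite structural conclusion: conditionally on the data, the limit law is \emph{random} (it depends on $(Z,\epsilon)$), whereas consistency would require it to coincide, in probability, with the fixed unconditional target $\sum_k\{\epsilon_k Z_k-(2p-1)\E[Z_k\1(Z_k<1)]\}\big/\sqrt{\sum_k Z_k^2}$. Because $v^*$ is independent of $(Z,\epsilon)$ and the centering term $(2p-1)\E[Z_k\1(Z_k<1)]$ is absent from the conditional limit, the two laws differ with positive $P$-probability, which is exactly what inconsistency means. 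A useful byproduct of this viewpoint, also exploited in the paper, is the denominator mismatch: given the data, $V_G^*$ is asymptotically a frozen constant $\sum_k Z_k^2$, while unconditionally $V_G$ converges to a nondegenerate $(\alpha/2)$-stable law, so the studentization cannot rescue the bootstrap. Your closing paragraph (``a fixed finite number of largest clusters carry a nonvanishing fraction of $\sum_g (v'\hat S_g)^2$'') is the correct statement and points in this direction, but it contradicts, rather than supports, the two-point-collapse step on which your written argument relies.
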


Given that the case of $\alpha < 2$ invalidates all conventional methods of CR inference, a natural question is how frequently such cases arise in empirical economics. 
To address this, we examined all articles published in two leading journals, the \textit{American Economic Review} and \textit{Econometrica}, during 2020–2021. 
From these, we extracted the subset of papers reporting estimation and inference results based on regressions, IV regressions, and related variants. 
We further restricted attention to articles that employ publicly available datasets due to replicability.

For these articles, we test the null hypothesis $H_0: \alpha = 2$ against the alternative $H_1: \alpha < 2$ for the score. 
Such a test can be implemented via the likelihood ratio test of \citet{sasaki2023diagnostic}, which considers the surrogate null hypothesis $H_0: \beta \ge 2$ against the alternative $H_1: \beta < 2$ in light of \eqref{eq:power}, where $\beta$ denotes the tail exponent of the score.\footnote{The test of the null hypothesis $H_0: \beta \ge 2$ against the alternative $H_1: \beta < 2$ is implemented using the Stata command 
\texttt{testout y x1 x2 ..., cluster(cid)} for least-squares estimation and 
\texttt{testout y x1 x2 ..., iv(z) cluster(cid)} for instrumental variables estimation, 
both following \citet{sasaki2023diagnostic}.}

Table~\ref{tab:test_alpha} summarizes the set of papers included in our study. 
The first two columns report the journals and years of publication. 
The next column, ``All \#,'' indicates the total number of eligible articles according to the selection criteria described above. 
The column group labeled ``Cluster'' contains articles in which CR inference is applied to at least one regression result. 
Within this group, the column ``\#'' reports the number of such articles, while the column ``Test $\alpha < 2$'' reports the fraction of these articles for which the test rejects the null hypothesis in one or more regression specifications. 
The final row presents the column totals.

\begin{table}[t]
\begin{center}
\begin{tabular}{lcccrrcccrr}
\hline\hline
&Year of& All && \multicolumn{3}{c}{Cluster}\\
\cline{3-3}\cline{5-7}
Journal & Publication & \# && \# & \multicolumn{2}{c}{Test $\alpha<2$}\\
\hline
\textit{American Economic Review} & 2020 & 15 && 10 & 7/10 & (70\%) \\
\textit{American Economic Review} & 2021 & 15 && 11 & 9/11 & (82\%) \\
\multicolumn{2}{r}{Subtotal} & 30 && 21 & 16/21 & (76\%)\\
\\
\textit{Econometrica} & 2020 & 12 && 7 & 7/8 & (88\%)\\
\textit{Econometrica} & 2021 & 3 && 2 & 1/2 & (50\%)\\
\multicolumn{2}{r}{Subtotal} & 15 && 10 & 8/10 & (80\%)\\
\\
\multicolumn{2}{r}{Total} & 45 && 31 & 24/31 & (77\%)\\
\hline\hline
\end{tabular}
\caption{The column ``All -- \#'' reports the total number of eligible articles that use regressions or IV regressions with publicly available replication data. The column ``Cluster -- \#'' reports the number of these articles that employ cluster-robust (CR) inference. The column ``Cluster -- Test $\alpha<2$'' reports the rejection rate of the null hypothesis $\alpha=2$ among articles using CR inference. The tests of $\alpha=2$ against the alternative $\alpha<2$ are implemented with the Stata commands ``\texttt{testout y x1 x2 ..., cluster(cid)}'' for regressions and ``\texttt{testout y x1 x2 ..., iv(z) cluster(cid)}'' for IV regressions, following \citet{sasaki2023diagnostic}.}${}$
\label{tab:test_alpha}
\end{center}
\end{table}

During 2020--2021, the \textit{American Economic Review} published 30 articles that met our selection criteria. 
Of these, 21 reported CR standard errors. 
The null hypothesis is rejected in 16 of these 21 papers. 
In other words, inference based on the conventional CR method may be misleading in approximately $76\%$ of the articles employing it.

During 2020--2021, \textit{Econometrica} published 14 articles that met our selection criteria. 
Of these, 10 reported CR standard errors. 
The null hypothesis is rejected in 8 of these 10 papers. 
In other words, inference based on the conventional CR method may be misleading in approximately $80\%$ of the articles employing it.

Combining the two journals, we find that inference may be misleading in as many as $77\%$ of the 31 articles that employ the conventional CR method. 
Thus, problematic practice appears to be prevalent even in these highly influential outlets.\footnote{Spreadsheets of all the test results with specific papers and specific equations are available upon request.} 
All of the above issues with conventional CR methods motivate our proposed approach: the cluster score (CS) bootstrap, which accommodates non-Gaussian limiting distributions, to be presented in Section~\ref{sec:bootstrap}.

%%%%%%%%%%%%%%%%%%%%%%%%%%%%
\section{The Cluster Score Bootstrap}\label{sec:bootstrap}
%%%%%%%%%%%%%%%%%%%%%%%%%%%%

In light of the limitations of conventional CR inference methods discussed in the previous section, 
we introduce a novel cluster score (CS) bootstrap procedure to approximate the limiting distribution of 
$(\widehat{\delta} - \delta)/\widehat{\sigma}$. 
This procedure remains valid whether the limiting distribution is Gaussian or non-Gaussian. 
Section~\ref{sec:method} describes the proposed method, and Section~\ref{sec:theory} provides its theoretical justification.

%%%%%%%%%%%%%%%%%%%%%%%%%%%%
\subsection{The Method}\label{sec:method}
%%%%%%%%%%%%%%%%%%%%%%%%%%%%

Our objective is to conduct statistical inference for $\delta$ using the $t$-statistic defined in 
\eqref{eq:t_statistic}. 
Let the CDF $J_G^\ast$ of the $t$-statistic be
\begin{align*}
J_G^*(t) = P\!\left( \frac{\widehat{\delta} - \delta}{\widehat{\sigma}} \leq t \right).
\end{align*}
We will show that, under suitable regularity conditions, $J_G^*$ converges to the CDF $J^*$ of the corresponding limiting distribution.

Let $b$ denote the number of resampled clusters, chosen according to 
Algorithm~\ref{alg:bickel_sakov} (to be presented in Section~\ref{sec:theory}). 
%For a large positive integer $M$, generate $M$ i.i.d. multinomial random vectors $(w_1^j, \ldots, w_G^j)_{j=1}^M $ with $b$ trials and each cell probability $G^{-1}$, independently of the data.  
For a large positive integer \(M\), draw \(M\) i.i.d. multinomial random vectors \((w_1^j,\ldots,w_G^j)_{j=1}^M\), each with \(b\) trials and uniform cell probability \(1/G\), independently of the data. This is equivalent to sampling \(b\) clusters  with replacement uniformly from the \(G\) clusters in the data, 
and \(w_g^j\) records the counts of how many times the $g$-th cluster is selected in the $j$-th bootstrap sample.
Define the CS bootstrap estimator and its associated variance estimator by
\begin{align*}
\widehat{\delta}_{b,j} 
&= r'\widehat{\theta}_{b,j} 
= \left(\frac{G}{b}\right) 
   r' \left( \sum_{g=1}^G X_g'X_g \right)^{-1} 
      \sum_{g=1}^G  X_g' w_g^j Y_g, \\
\widehat{\sigma}_{b,j}^2 
&= \left(\frac{G}{b}\right)^2 
   r' \left(\sum_{g=1}^G X_g'X_g \right)^{-1} 
      \left( \sum_{g=1}^G \widehat{S}_{g,j} w_g^j \widehat{S}_{g,j}' \right) 
   \left(\sum_{g=1}^G X_g'X_g \right)^{-1} r,
\end{align*}
where 
$
\widehat{S}_{g,j} = X_g' \big( Y_g - X_g \widehat{\theta}_{b,j} \big).
$

Note that the inverse factor $\left(\sum_{g=1}^G X_g'X_g \right)^{-1}$ is computed from the full unweighted sample, whereas the linear component and its variance are constructed from the bootstrap sample. 
Practical motivations for this feature will be discussed in Remark~\ref{rem:non-invertibility}.

Define the bootstrapped empirical distribution function $\widehat{L}_{G,b}$ of 
$(\widehat{\delta}_{b,j} - \widehat{\delta})/\widehat{\sigma}_{b,j}$ by
\begin{align*}
\widehat{L}_{G,b}(t) 
= \frac{1}{M}\sum_{j=1}^M 
   \mathds{1}\!\left( \frac{\widehat{\delta}_{b,j} - \widehat{\delta}}
                            {\widehat{\sigma}_{b,j}} \le t \right).
\end{align*}
For any $a \in (0,1)$, define the corresponding critical value as
\begin{align*}
\widehat{c}_{G,b}(1-a) 
= \inf \left\{ t \in \mathbb{R} : \widehat{L}_{G,b}(t) \ge 1-a \right\}.
\end{align*}
As will be formally established in Section \ref{sec:theory}, this critical value is guaranteed to satisfy
\begin{align*}
P\!\left( \frac{\widehat{\delta} - \delta}{\widehat{\sigma}} 
   \le \widehat{c}_{G,b}(1-a) \right) \;\to\; 1-a
\end{align*}
as $G \to \infty$. 
Hence, the CS bootstrap method provides asymptotically valid inference.

\bigskip\noindent
{\bf Practical Implication:}
For the $t$-statistic, one may continue to use the conventional CR ``standard error'' 
$\widehat{\sigma}$ even though it may diverge.\footnote{Note that the ``standard error'' 
$\widehat{\sigma}$ does not converge in probability when $\alpha < 2$.} 
However, rather than relying on conventional Gaussian critical values 
(e.g., $\Phi^{-1}(0.025) \approx -1.96$ and $\Phi^{-1}(0.975) \approx 1.96$), 
one should instead employ the bootstrap-based critical values 
$\widehat{c}_{G,b}(0.025)$ and $\widehat{c}_{G,b}(0.975)$ obtained from the 
CS bootstrap procedure. 
These critical values, for example, can be used to construct a 95\% confidence interval for $\delta$. $\blacktriangle$
\medskip

\begin{remark}[Practicality of the method]	\label{rem:practicality}
Even though the convergence rate of $\widehat{\delta} - \delta$ is unknown, our inference remains valid because it is based on a self-normalized statistic. 
Moreover, our procedure does \emph{not} require estimation of the unknown stability index $\alpha$, 
nor is it necessary to estimate the slowly varying functions $L_1$ and $L_2$. 
These features represent important practical advantages of our proposed method, as these nuisance parameter estimation problems are well known to be challenging in the statistics literature.
$\blacktriangle$
\end{remark}

\begin{remark}[Finite sample non-invertibility of other resampling methods]\label{rem:non-invertibility}
Compared with conventional resampling methods, the CS bootstrap offers two advantages. 
First, because it does not require recomputation of the inverse factor at each bootstrap iteration, the method is computationally more efficient. 
Second, and more importantly, in finite samples, when the regressors include a cluster-specific binary treatment variable or other dummies that are highly correlated within a cluster, the matrix $\sum_{g = 1}^G w_g^j X_g'X_g$ is often singular for small $b$, as is common in cluster-RCT settings. 
Consequently, the resampled OLS estimator may be undefined in a non-negligible fraction of bootstrap iterations. 
This problem also arises in other cluster-based resampling methods, such as the jackknife, subsampling, and the conventional bootstrap. 
In practice, several \textit{ad hoc} ``fixes,'' such as employing a generalized inverse or dropping such realizations, are often used, though their theoretical justification remains unclear. 
By contrast, the proposed CS bootstrap procedure, which relies on the full unweighted matrix $\sum_{g = 1}^G X_g'X_g$, avoids this issue in a theoretically supported manner.
$\blacktriangle$
\end{remark}

\begin{remark}[Inference using parametric bootstrap]\label{rem:inference_using_asymptotic_distribution}
The $t$-statistic has a complicated but well-defined class of limiting distributions, 
as illustrated in Figure~\ref{figure:limit_distribution}. 
A natural alternative approach is to bootstrap critical values from this known limiting distribution for inference, as suggested in \cite{cornea2015parametric}. 
However, this requires estimation of the unknown parameters: the index of stability $\alpha$ 
and the measure of limiting symmetry $p$ as defined in \eqref{eq:tail_balancing}. 
Our simulation results show that inference based on estimated values of $\alpha$ and $p$ performs poorly in finite sample. 
Moreover, estimating $\alpha$ and $p$  requires selecting tuning parameters that are inherently \textit{ad hoc} choices. 
The resulting inference is highly sensitive to this tuning and remains imprecise unless the number of clusters is quite large (e.g., exceeding 2000). 
For these reasons, we do not recommend this parametric bootstrap-based approach in our setup.
$\blacktriangle$
\end{remark}

\begin{remark}[Subsampling]\label{rem:subsampling}
 If the i.i.d. count vectors $(w_1^j, \ldots, w_G^j)$ are instead generated by sampling $b$ out of $G$ units without replacement, the procedure would entail a version of score subsampling. Theoretical results for this alternative subsampling approach are established in a manner similar to those for the CS bootstrap. 
$\blacktriangle$
\end{remark}

%%%%%%%%%%%%%%%%%%%%%%%%%%%%
\subsection{Theoretical Properties}\label{sec:theory}
%%%%%%%%%%%%%%%%%%%%%%%%%%%%
We now provide theoretical support for our proposed CS bootstrap method. 
% Section~\ref{sec:theory:fixed} establishes its asymptotic validity, while Section~\ref{sec:choice_b} introduces an algorithm for selecting $b$ and provides the theoretical justification for this choice.
%%%%%%%%%%%%%%%%%%%%%%%%%%%%
%\subsubsection{Asymptotic Size Control}\label{sec:theory:fixed}
%%%%%%%%%%%%%%%%%%%%%%%%%%%%
The following theorem provides a formal justification of its robust asymptotic validity. % of the CS bootstrap method.

\begin{theorem}[Cluster-Robust Inference by the CS Bootstrap]\label{cor:thin_tail}
Suppose that Assumption \ref{assm} is satisfied.
If $b \to \infty$ and $b/G=o(1)$ as $G\to \infty$, and $M\to\infty$, then
\begin{align*}
\sup_{t\in \Real}|\hat L_{G,b}(t)-J^*(t)|\stackrel{p}{\to} 0
\end{align*}
for a continuous limiting distribution $J^*(\cdot)$. 
Thus, for any significance level $a \in (0,1)$,
\begin{align*}
P\left((\hat \delta - \delta)/\hat \sigma \le \hat c_{G,b}(1-a)\right)\to 1-a.
\end{align*}
%In addition, the same conclusion holds with subsampling in place of CS bootstrap.
\end{theorem}

%The proof proceeds by considering two cases. 
%The first part addresses the case $\alpha < 2$, 
%with the result stated as Lemma~\ref{theorem:main} in Appendix~\ref{sec:theory_alpha_less_than_two} 
%and proved in Appendix~\ref{sec:proof:theorem:main}. 
%The case $\alpha = 2$ is treated in Appendix~\ref{sec:proof:cor:thin_tail}, 
%which also combines the two cases to establish Theorem~\ref{cor:thin_tail}.

The proof is non-trivial and proceeds by considering two distinct cases. The first, corresponding to $\alpha < 2$, is formalized in Lemma~\ref{theorem:main} in Appendix~\ref{sec:theory_alpha_less_than_two} and proved in detail in Appendix~\ref{sec:proof:theorem:main}. The second case, $\alpha = 2$, is analyzed in Appendix~\ref{sec:proof:cor:thin_tail}, which also synthesizes the two regimes to establish Theorem~\ref{cor:thin_tail}. Here, asymptotics are taken with respect to $G \to \infty$ for a given DGP; further studies on uniformity can be found in Appendix~\ref{sec:theory:uniform}.

%The limiting distribution approximated by the proposed CS bootstrap method is non-pivotal. 
%It depends on two parameters: the stability index $\alpha$ and the parameter $p$ defined in \eqref{eq:tail_balancing}, which captures the tail asymmetry of the distribution of $v' S_g$. 
%Figure~\ref{figure:limit_distribution} illustrates the influence of these parameters on the limiting distribution; see also Remarks~\ref{rem:practicality} and~\ref{rem:inference_using_asymptotic_distribution}.

%%%%%%%%%%%%%%%%%%%%%%%%%%%%
%\subsubsection{Choice of $b$}\label{sec:choice_b}
%%%%%%%%%%%%%%%%%%%%%%%%%%%%
While the theory requires $b \to \infty$ and $b/G = o(1)$ as $G \to \infty$, 
in practice the researcher must select a finite value of $b$. 
This choice should be neither too large nor too small. 
Intuitively, if $b$ is chosen too close to $G$, the largest clusters are sampled too frequently in the bootstrapped $t$-statistics, which prevents the procedure from adequately reflecting the heavy-tailed nature of the DGP when $\alpha < 2$.
Conversely, if $b$ is too small, the bootstrapped $t$-statistics become excessively noisy. 
Thus, one seeks a value of $b$ that lies in a stable range, 
such that small perturbations of $b$ (e.g., increasing or decreasing it by one) have only minimal impact on the bootstrap distribution. 
In the context of the $m$-out-of-$n$ bootstrap, \citet{bickel2008choice} formalized this idea and proposed a data-driven algorithm with theoretical guarantees for its validity. 
Here, we introduce a modified version of their algorithm tailored to our setting.
%\footnote{See Remark~\ref{rem:modified_algorithm} for a discussion of the modification.}

%\yuya{What is the difference between $\widehat L_{G,b}$ and $L_{G,b}^*$?}
%\HDC{$L_{G,b}^*$ is the true conditional CDF (on data) of bootstrap distribution. $\hat L_{G,b}$ is the empirical one (for a fixed $M$).}
%\yuya{But then, wouldn't it be inappropriate to use $L_{G,b}^*$ in the algorithm because it is infeasible, while an algortihm is expected to be feasible to implement?}
%For a sample of size $G$ and any $b \le G$, define $L_{G,b}^*$ as the bootstrap distribution of 
%\(
%T_{G,b} = (\widehat{\delta}_{b,1} - \widehat{\delta})/\widehat{\sigma}_{b,1}
%\)
%given the observed data. With this notation in place, we now describe the algorithm.

\begin{algorithm}[Data-Driven Choice of $b$]\label{alg:bickel_sakov}${}$
    \begin{enumerate}[(i)]
    \item Let $b_\ell=\lceil q^\ell \cdot \phi(G) \rceil$ for each $\ell \in \mathbb{N}$, where $\lceil a\rceil$ stands for the smallest integer greater than or equal to $ a$, $q \in (0,1)$, and $\phi$ is a strictly sub-linear function.\footnote{For instance, we set $\phi(G) = G^{0.99}$. This specification, together with $q = 0.99$, is used in our numerical studies and empirical application.}

    \item Obtain 
    $\hat L_{G,b_\ell}$
    for each $\ell \in \mathbb{N}$ by simulation.
    \item Set
    \[\hat b=\underset{\ell \in \mathbb{N}}{\argmin}\sup_{t\in \Real}\left|\hat L_{G,b_{\ell}}(t)-\hat L_{G,b_{\ell+1}}(t)\right|.\]
    	\item 	If there is a tie, let $\widehat{b}$ be the largest $b_\ell$ among them.
    \end{enumerate}
\end{algorithm}

The following theorem provides theoretical guarantees for the data-driven choice of $\widehat{b}$ in the CS bootstrap. 
Specifically, it shows that Theorem~\ref{cor:thin_tail} continues to hold with our data-driven choice of $\widehat{b}$. 
%We also note that, with minor modifications, the same method can be used for subsampling.

%%%%%%%%%%%%%%%%%%%%%%%%%%%%%%%
\begin{theorem}[Cluster-Robust Inference by the Data-Driven CS Bootstrap]\label{theorem:data-driven}
Suppose that Assumption \ref{assm} is satisfied and $b=\hat b$ is chosen according to Algorithm \ref{alg:bickel_sakov}. 
Then, the conclusion of Theorem \ref{cor:thin_tail} continues to hold.
\end{theorem}
%%%%%%%%%%%%%%%%%%%%%%%%%%%%%%%

\noindent
A proof is found in Appendix \ref{sec:theorem:data-driven}. Hence, following the selection of $\hat b$, asymptotically valid inference can be carried out using the corresponding critical value $\hat c_{G,\hat b}(1-a)$.
%{\color{red}Conditioning on the selected $\hat{b}$, we further select random subsamples from all $G$ clusters to perform the CS bootstrap. }

\section{Simulation Studies}\label{sec:simulation}
%%%%%%%%%%%%%%%%%%%%%%%%%%%%'

In this section, we present simulation studies evaluating the finite-sample performance of our proposed method of genuinely robust CR inference, based on the cluster score (CS) bootstrap, in comparison with conventional CR methods.

The data-generating design is defined as follows. 
We consider the cluster treatment model with individual covariates
$$
Y_{gi} = \theta_0 + \theta_1 T_g + \sum_{j=1}^K \theta_j X_{g,i,j+1} + U_{gi},
$$
following \citet*[][Equation~(40)]{MaNiWe2022fast}, among others. 
The binary treatment variable $T_g$ equals one for $\lceil 0.2G \rceil$ clusters and zero for the remaining $G - \lceil 0.2G \rceil$ clusters, where $\lceil a \rceil$ denotes the smallest integer greater than or equal to $a$. 
Cluster sizes are drawn independently as 
$N_g \sim \lceil \text{Pareto}(1,\alpha) \rceil$ for $g \in \{1,\ldots,G\}$. 
For each $g \in \{1,\ldots,G\}$, we independently draw $N_g$-variate random vectors 
$(\tilde X_{g1j},\ldots,\tilde X_{gN_gj})' \sim \mathcal{N}(0,\Omega)$ for $j \in \{1,\ldots,K\}$ 
and $(\tilde U_{g1},\ldots,\tilde U_{gN_g})' \sim \mathcal{N}(0,\Omega)$ in the baseline design, 
where $\Omega$ is an $N_g \times N_g$ covariance matrix with $\Omega_{ii}=1$ for all $i \in \{1,\ldots,N_g\}$ and $\Omega_{ii'}=1/2$ whenever $i \neq i'$. 
The controls are constructed as 
$X_{gij} = 0.2 F_{\text{Beta}(2,2)}^{-1} \circ \Phi(\tilde X_{gij})$, 
where $F_{\text{Beta}(2,2)}$ and $\Phi$ denote the CDFs of the $\text{Beta}(2,2)$ and standard normal distributions, respectively. 
The errors are constructed heteroskedastically as $U_{gi} = 0.2 \tilde U_{gi}$ if $T_g=0$ and $U_{gi} = \tilde U_{gi}$ if $T_g=1$.

We vary the exponent parameter $\alpha \in \{1.1,1.2,\ldots,1.9,2.0\}$ across simulation sets. 
The regression coefficients are fixed at $(\theta_0,\theta_1,\theta_2,\ldots,\theta_{K+1})' = (1,1,1,\ldots,1)'$ throughout, while the covariate dimension varies as $K \in \{5,10\}$. 
The sample size (i.e., the number of clusters) is set to $G = 50$ across all simulations, which is roughly comparable to the number of U.S.\ states. 
Each simulation set consists of 10,000 Monte Carlo iterations.

%\HDC{Need to change the "MNB" label in the Figure.}

\begin{figure}[t]
\centering
\includegraphics[width=0.36\textwidth]{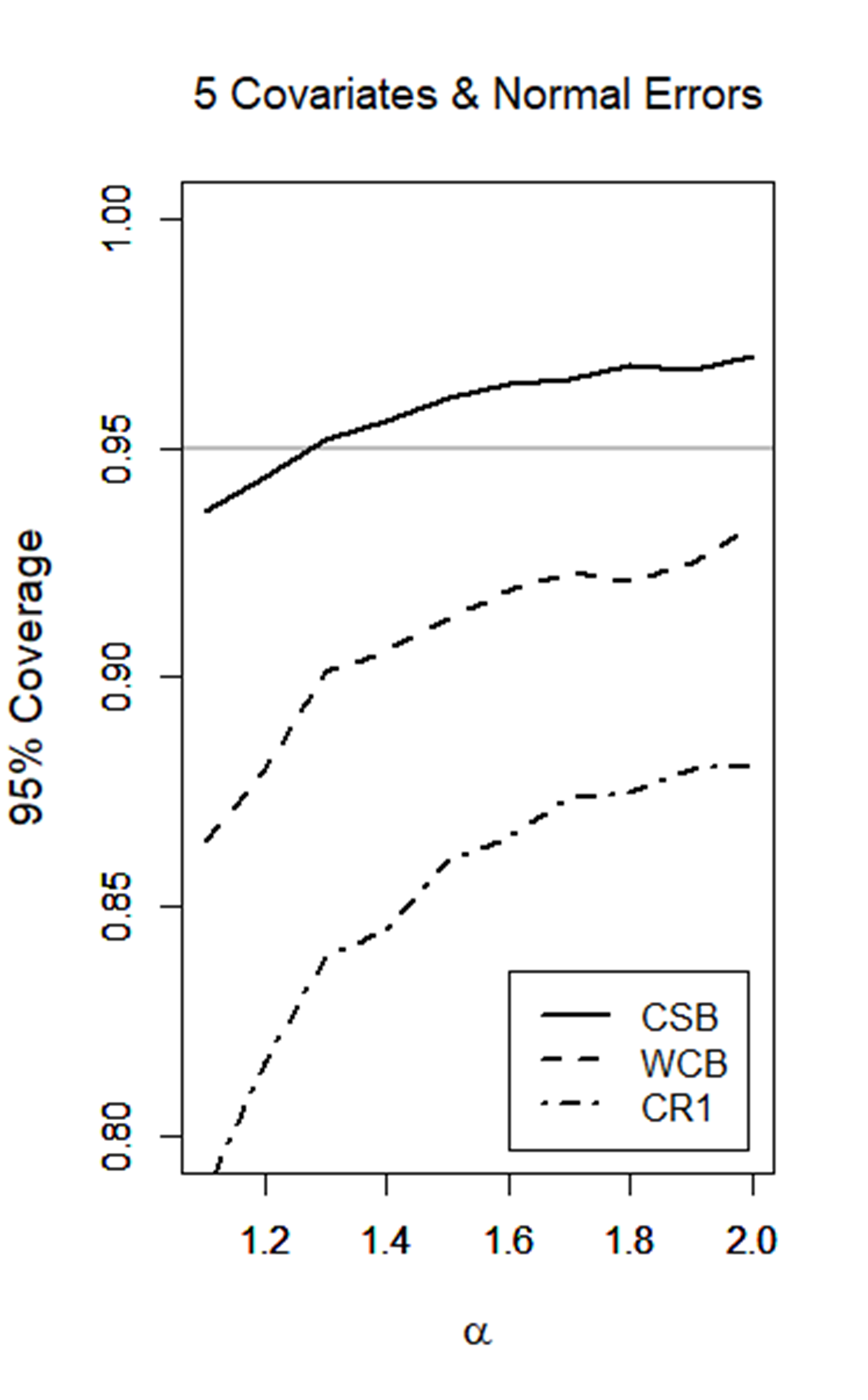}
\includegraphics[width=0.36\textwidth]{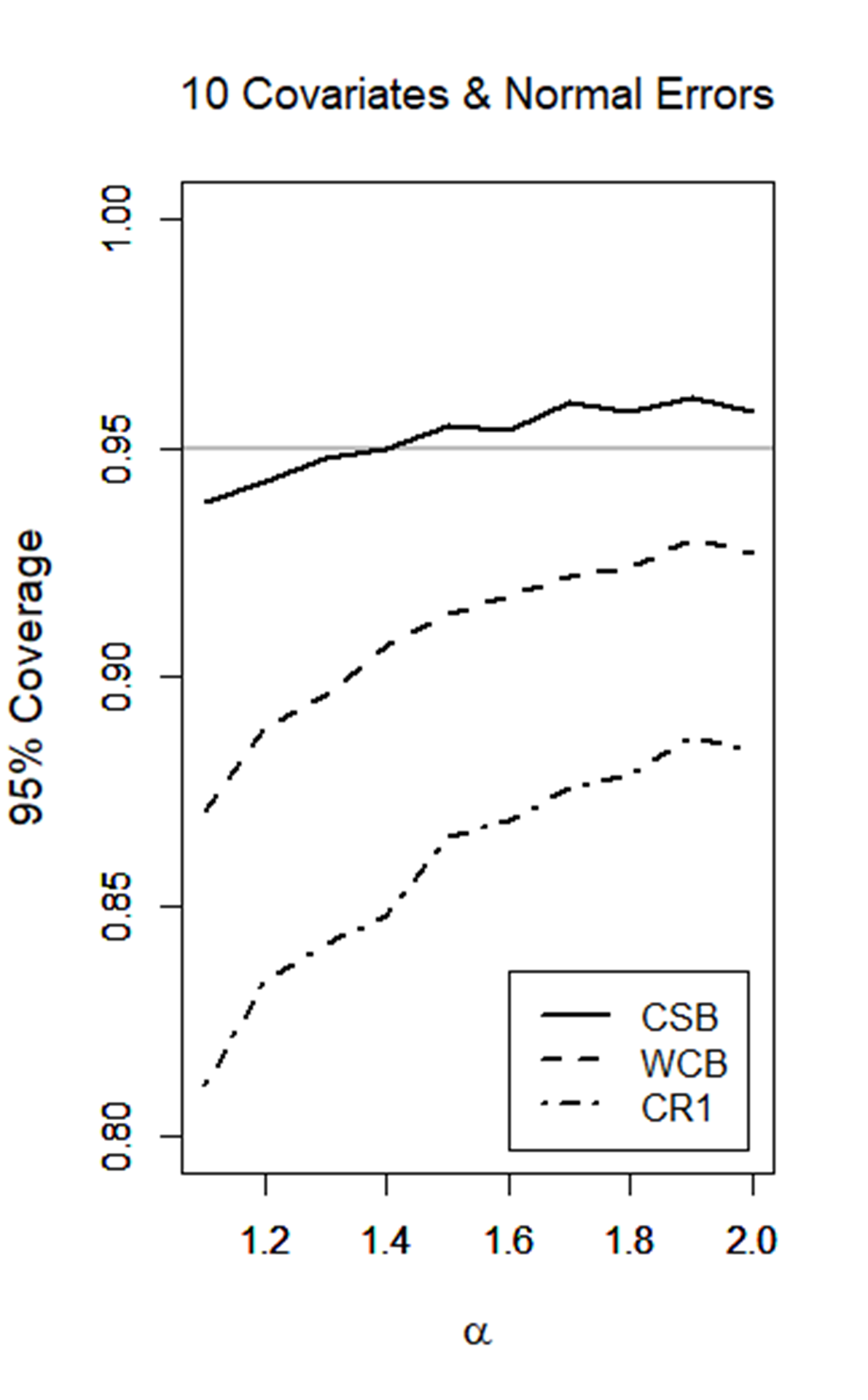}
\caption{Monte Carlo coverage frequencies for the baseline designs with normal errors. ``CSB'' (respectively,~``WCB'' and ``CR1'') denotes the CS bootstrap (respectively,~wild cluster bootstrap and CR1 standard errors with normal critical values). The nominal 95\% coverage probability is indicated by the horizontal gray line.}${}$
\label{fig:sim_normal}
\end{figure}

Figure~\ref{fig:sim_normal} reports the Monte Carlo coverage frequencies. 
The horizontal axis represents the value of $\alpha$, and the vertical axis represents the coverage frequency. 
In the legend, `CSB' denotes the CS bootstrap, while `WCB' and `CR1' denote the wild cluster bootstrap and the CR1 standard error with normal critical values, respectively. 
The nominal coverage probability of 95\% is indicated by the horizontal gray line at 0.95.

The CS bootstrap performs best, followed by the WCB and the CR1. 
Overall, the CS bootstrap consistently delivers coverage frequencies closest to the nominal 95\% level across the range of $\alpha$. 
By contrast, both conventional methods suffer from under-coverage, particularly for small values of $\alpha$.

%%%%%%%%%%%%%%%%%%%%%%%%%%%%
\section{An Empirical Illustration}\label{sec:application}
%%%%%%%%%%%%%%%%%%%%%%%%%%%%
%\HDC{Update the name of our bootstrap in this section.}

\citet{akhtari2022political} study the effects of political turnover on various outcomes measuring the quality of public services in Brazil. 
In their original paper (Table~3, Column~5), they estimate the following linear model by OLS:
\begin{align*}
\text{score}_{gi+1} = \theta_0 + \theta_1 \cdot \mathds{1}\{\text{IVM}_{g}<0\} + \theta_2 \cdot \text{IVM}_{g} + \theta_3 \cdot \mathds{1}\{\text{IVM}_{g}<0\} \cdot \text{IVM}_{g} + \theta_4 \cdot \text{score}_{gi} + U_{gi}.
\end{align*}
The dependent variable, $\text{score}_{gi+1}$, is the test score of fourth-grade students in the year following an election. 
The main explanatory variable, $\text{IVM}_{g}$, is the incumbent vote margin. 
Thus, $\mathds{1}\{\text{IVM}_{g}<0\}$ equals one when the incumbent party loses the election. 
The parameter of interest is $\delta = \theta_1$, which measures the effect of political turnover on test scores.\footnote{This effectively implements a sharp regression discontinuity design, although the original paper estimates the effect by OLS using this linear specification.} 
While the original paper considers alternative `bandwidths,' we focus on the bandwidth 0.110 to maximize the sample size, following prior work \citep[][Sec.~7.2]{MaNiWe2022fast}, which replicates this regression.

The original paper clusters standard errors at the municipality level, and we follow this definition of the cluster unit. 
There are $G = 2101$ municipalities in the data, with $\max_{1 \leq g \leq G} N_g^2 / N \approx 26$. 
Thus, the assumption $\max_{1 \leq g \leq G} N_g^2 / N \to 0$, under which conventional CR inference methods are guaranteed to work, is difficult to justify in this application.

Table~\ref{tab:application} reports the $p$-values for $\delta = \theta_1$ based on alternative inference methods. 
Column HC1 reports the $p$-value using conventional inference without clustering, i.e., the HC1 standard error. 
Columns CR1 and WCB report the $p$-values using conventional CR inference methods, namely the CR1 standard error and the wild cluster bootstrap,\footnote{While there are four variants of WCB \citep[cf.][Sec.~7.2]{MaNiWe2022fast}, they yield identical $p$-values up to the reported digits, and hence we summarize them in a single column.} respectively, with normal approximation. 
Finally, column CSB reports the $p$-value based on our proposed inference method using the CS bootstrap. 
We employ the same code as in the simulation studies in Section~\ref{sec:simulation}, including the choice of $b$ based on the minimum volatility method.

\begin{table}
\centering
\begin{tabular}{cccccc}
\hline\hline
& HC1 & CR1 & WCB & CSB\\
\hline
p-value & 0.000 & 0.006 & 0.006 & 0.266\\
\hline\hline
\end{tabular}
\caption{$p$-values for the effect $\delta = \theta_1$ of political turnover on fourth-grade students' test scores, based on conventional inference methods (HC1, CR1, WCR) and our proposed CS bootstrap (CSB).}${}$
\label{tab:application}
\end{table}

The $p$-value is zero up to the third digit when standard errors are not clustered (HC1). 
Conventional CR inference methods (CR1 and WCB) with normal approximation yield larger $p$-values, but the statistical significance remains unchanged. 
By contrast, our proposed CS bootstrap method produces a much larger $p$-value, rendering the effect $\delta = \theta_1$ statistically insignificant, unlike any of the conventional methods. 
These results highlight that failing to account for potential non-Gaussianity in the limiting distributions, particularly in the presence of unignorably large clusters, can lead to erroneous statistical conclusions.

%%%%%%%%%%%%%%%%%%%%%%%%%%%%
\section{Summary}
%%%%%%%%%%%%%%%%%%%%%%%%%%%%

Conventional methods for cluster-robust inference often fail to provide consistent results in the presence of unignorably large clusters. 
In this paper, we formalize this limitation by deriving a necessary and sufficient condition for consistency. 
We document that 77\% of empirical research articles published in the \textit{American Economic Review} and \textit{Econometrica} during 2020--2021 contain model specifications fail to satisfy this condition. 

To address this challenge, we propose the CS bootstrap and establish its size control across a wide class of data-generating processes where conventional methods break down. 
Our simulation studies confirm the reliability and effectiveness of the proposed method, underscoring its practical value in overcoming the limitations of existing cluster-robust inference techniques. We further demonstrate the failure of the wild cluster bootstrap in Section \ref{sec:inconsistent_wcb} and discuss the related uniformity issues in Section \ref{sec:theory:uniform} in the appendix, reinforcing the need for our proposed approach.
Finally, we demonstrate that correctly accounting for potential non-Gaussianity can overturn empirical conclusions. We conclude the paper by modifying a well-known \textit{haiku} by \cite{hiranohaiku}.

%\begin{quote}
%T-stat looks too good.\\
%Use robust standard errors--\\
%significance gone.
%\end{quote}

\begin{quote}
T-stat looks too good.\\
Use \textit{the cluster score bootstrap}--\\
significance gone.
\end{quote}

\newpage
%%%%%%%%%%%%%%%%%%%%%%%%%%%%
\section*{Appendix}\appendix
%%%%%%%%%%%%%%%%%%%%%%%%%%%%
\section{Omitted Details}\label{sec:details}
%%%%%%%%%%%%%%%%%%%%%%%%%%%%

This appendix section collects technical details that are omitted from the main text.

%%%%%%%%%%%%%%%%%%%%%%%%%%%%
\subsection{Alternative Characterization of $\xi_g$ Belonging to the Domain of Attraction of an $\alpha$-Stable Distribution for $\alpha<2$}\label{sec:alternative}
%%%%%%%%%%%%%%%%%%%%%%%%%%%%

Citing a result from the existing literature, this section presents complete details about the power law characterization \eqref{eq:power} discussed in Section \ref{sec:model} in the main text.

\begin{theorem}[\citealp{pena2009self}, Theorem 2.24]\label{theorem:pena}
If $\alpha<2$,
then $\xi_g$ belongs to the domain of attraction of an $\alpha $-stable distribution if and only if  
\begin{align}
&P(|\xi_g|> t)= t^{-\alpha}L(t) \qquad\text{and} \label{eq:pena1}\\
&\lim_{t\to \infty}\frac{P(\xi_g> t)}{P(|\xi_g|> t)}=p \label{eq:pena2}
\end{align}
for some $p \in [0,1]$ and some slowly varying function $L(\cdot)$.
\end{theorem}

The first condition \eqref{eq:pena1} means that the tail limit of the absolute value of the random variable of interest  has an approximately Pareto tail, or so-called power law.
Known as the balancing condition, the second condition \eqref{eq:pena2} in this alternative characterization imposes a mild restriction on the existence of limiting ratios of one-sided tail probabilities over the two-sided tail probability; it rules out some pathological, infinitely oscillating type situations such that these limiting ratios do not exist. This condition only imposes restrictions in the limit and accommodates a wide range of tail behaviors as $p$ are permitted to be even $0$ and $1$, thereby allowing cases from distributions that are bounded on one side to distributions with heavy two-sided tails.  

%%%%%%%%%%%%%%%%%%%%%%%%%%%%
\subsection{Auxiliary Theory Focusing on Cases with $\alpha<2$}\label{sec:theory_alpha_less_than_two}
%%%%%%%%%%%%%%%%%%%%%%%%%%%%

This section presents a lemma that we state and prove on the way to proving Theorem \ref{cor:thin_tail} in Section \ref{sec:theory} in the main text.
Namely, for ease of writing, we state our main result focusing on cases with $\alpha \in (1,2)$.
An extension of this result to the general cases with $\alpha \in (1,2]$ follows as Theorem \ref{cor:thin_tail} with additionally accounting for the case with $\alpha=2$.

\begin{lemma}\label{theorem:main}
Suppose that Assumption \ref{assm} is satisfied for $\alpha \in (1,2)$.
If $b\to \infty$, $b/G=o(1)$, and $M\to\infty$ as $G\to \infty$, then
\begin{align*}
\sup_{t\in \Real}|\hat L_{G,b}(t)-J^*(t)|\stackrel{p}{\to} 0,
\end{align*}
and thus
\begin{align*}
P\left((\hat \delta - \delta)/\hat \sigma \le \hat c_{G,b}(1-a)\right)\to 1-a.
\end{align*}
\end{lemma}

A proof is provided in Appendix \ref{sec:proof:theorem:main}.

%%%%%%%%%%%%%%%%%%%%%%%%%%%%%%%%%%%%%%%%%%%%
\begin{remark}[Heavy-tailed cluster sums]
In this lemma, we essentially assume that the tails of the distributions of $\|S_g\|$ and $\|X_g'X_g\|$ both follow the power law with the shape parameter (Pareto exponent) in $(1,2)$, which implies that the variances of $S_g$ and $(X_g' X_g) $ do not exist. 
See Appendix \ref{sec:alternative}.
This is a rather general condition in the sense that the heavy tail can come from the distribution of cluster sizes $N_g$, the distribution of individuals' $(X_{gi}',U_{gi})$, or both. 
$\blacktriangle$
\end{remark}
%%%%%%%%%%%%%%%%%%%%%%%%%%%%%%%%%%%%%%%%%%%%

%%%%%%%%%%%%%%%%%%%%%%%%%%%%%%%%%%%%%%%%%%%%
\begin{remark}[Unignorability and impossibility of normal approximation]\label{remark:impossibility}
An inspection of the proof of Lemma \ref{theorem:main}, combined with Remark 2 in \cite{lepage1981convergence}, unveils that, when $\alpha<2$, the tails of the first component of representation (\ref{eq:rates_1}) satisfies  
\begin{align*}
P\left(\left|\epsilon_1Z_1-(2p-1)\E[Z_1\1(Z_1<1)]\right|>t\right)\sim P\left( \left|\sum_{k=1}^\infty\{\epsilon_kZ_k-(2p-1)\E[Z_k\1(Z_k<1)]\}\right|>t\right)
\end{align*}
as $t\to\infty$.
Since the term $\left|\epsilon_1Z_1-(2p-1)\E[Z_1\1(Z_1<1)]\right|$  corresponds to the limiting distribution of the absolute value of the scaled score of the largest cluster, it has an asymptotically unignorable influence on the limiting $\alpha$-stable distribution -- see also Section 1.4 in \citet{samorodnitsky1994stable}. 
For ease of illustration, suppose that the regressor and error distributions are uniformly bounded and $\Cov(X_{gi} U_{gi}, X_{gi}U_{gi'}|N_g)\ge \underline c >0$  for all $i=1,...,N_g$ with probability one. 
This then implies
\begin{align*}
\frac{\max_{g=1,...,G} \|S_g\|}{N}\sim_p\frac{\max_{g=1,...,G} N_g}{N}\gg 0,
\end{align*} 
which directly violates the necessary and sufficient condition for the asymptotic variance to be estimable derived in Corollary 4.1 in \cite{kojevnikov2021some}, as well as the
 conventional assumption
\begin{align*}
\frac{\max_{g=1,...,G} N_g^2}{N}=o_p(1),
\end{align*} 
required in the literature (e.g. Assumption 2 in \citealt{HaLe19}) for normal approximation.\footnote{It is assumed in the literature of CR inference based on the normal approximation that
$
\frac{\max_{g=1,...,G} N_g^2}{N}=o_p(1).
$
When $\mathbb{E}[N_g]=c>0$ exists, this assumption is equivalent to
$
\frac{\max_{g=1,...,G} N_g^2}{G}=o_p(1). 
$}

In addition,  a necessary and sufficient condition for the limiting distribution of sums of independent random variables to be normal is the uniform asymptotic negligibility condition, i.e., the largest summand in absolute value has an asymptotically negligible contribution to the sum \citep[cf.][Theorem 23.13]{davidson1994stochastic}. 
Thus, it is impossible to derive a theoretically valid normal-approximation-based procedure of inference in the presence of unignorably large clusters without imposing restrictions on within-cluster dependence.
$\blacktriangle$
\end{remark}
%%%%%%%%%%%%%%%%%%%%%%%%%%%%%%%%%%%%%%%%%%%%%%%%

%%%%%%%%%%%%%%%%%%%%%%%%%%%%%%%%%%%%%%%%%%%%%%%%
\begin{remark}[On CR standard error estimation]
The test statistic we consider is the standard t-statistic used in the literature. 
Its denominator consists of a CR standard error without imposing a null hypothesis. 
When $\alpha<2$, the asymptotic variance does not exist, and nor is this ``standard error'' consistent but remains random asymptotically. 
This is similar in spirit to the fixed-$b$ asymptotics \citep[e.g.,][]{kiefer2002heteroskedasticity} in the literature of long-run variance estimation, although the underlying theory is completely different as the fixed-$b$ asymptotics crucially relies on normal approximation and the functional central limit theorem. 
Showing that this ``standard error'' with estimated residuals has negligible impact on the asymptotic distribution requires a completely different proof strategy from the conventional approach of those taken in the proof of Theorem 7.6 in \cite{hansen2022econometrics}.
$\blacktriangle$
\end{remark}
%%%%%%%%%%%%%%%%%%%%%%%%%%%%%%%%%%%%%%%%%%%%%%%%

%%%%%%%%%%%%%%%%%%%%%%%%%%%%
\subsection{Inconsistency of the Wild Cluster Bootstrap under $\alpha<2$.}\label{sec:inconsistent_wcb}
%%%%%%%%%%%%%%%%%%%%%%%%%%%%

The wild cluster bootstrap \citep{cameron2008bootstrap} is a popular alternative resampling method of CR inference.
It has been shown in various simulation studies to behave well under $\alpha=2$.
Validity of the wild cluster bootstrap in cases of $\alpha=2$ has been shown in \citet{DjMaNi19} under fairly general conditions. 
As their proof relies crucially on Lyapunov's CLT, however, their arguments do not hold under $\alpha<2$ -- see Remark \ref{remark:impossibility}. 
A remaining and potentially more interesting question is whether one can prove its validity using an alternative argument.
The following result suggests that such efforts are ill-fated when $\alpha<2$.

For simplicity of illustration, consider the case of a univariate regression with only the intercept, i.e. a cluster sampled mean
$
\hat \theta = N^{-1}\sumg \sumi Y_{gi}
$
with the cluster specific population mean normalized to $\theta=\E\left[\sumi Y_{gi}\right]=0$ without loss of generality.
Suppose that the parameter of inference is $\theta$.
Under the null hypothesis $H_0:\theta=0$, the standard CR t-statistic can be formed as
\begin{align*}
T_{G}=\frac{\sumg \sumi Y_{gi}}{\sqrt{\sumg \left(\sumi (Y_{gi}-\hat \theta)\right)^2}}.
\end{align*}
The wild-cluster-bootstrap version of the estimator is defined by $\hat\theta^*=N^{-1}\sumg v_g^*\sumi Y_{gi}$, where $(v_g^*)_{g=1}^G$ are i.i.d. Rademacher auxiliary random variables generated by a researcher independently from the observed data $\{\{Y_{gi}\}_{i=1}^{N_g}\}_{g=1}^G$. 
The null-imposed wild cluster bootstrap test statistic is defined by
\begin{align*}
T_{G}^{*}=\frac{\sumg v_g^*\sumi Y_{gi}}{\sqrt{\sumg \left(v_g^*\sumi (Y_{gi}-\hat \theta^*) \right)^2}}.
\end{align*}
We introduce the notation $Y_{1:G}=(Y_{gi}:g=1,...,G, i=1,...,N_g)$ for convenience.
As Lemma \ref{theorem:main} implies continuity of the limiting distribution of $T_G$, 
the wild cluster bootstrap is consistent if
\begin{align*}
\sup_{t\in \Real}\left|P(T_G^*\le t|Y_{1:G})-P (T_G\le t)\right|=o_p(1)
\qquad\text{as }G\to \infty.
\end{align*}

\begin{theorem}[Inconsistency of the wild cluster bootstrap]\label{theorem:bootstrap}
	Under the above setup and Assumption \ref{assm}, if $\alpha \in (1,2)$, then the wild cluster bootstrap with Rademacher auxiliary random variables is inconsistent. 
\end{theorem}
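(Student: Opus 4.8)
Write $W_g=\sum_{i=1}^{N_g}Y_{gi}$ for the cluster sums and let $A_G=G^{1/\alpha}L(G)$ be the $\alpha$-stable normalizing sequence associated with $W_g$ under Assumption \ref{assm}. The strategy is to represent both $T_G$ and the \emph{conditional} law of the bootstrap statistic $T_G^*$ through LePage-type series, and to exhibit a structural discrepancy that does not vanish as $G\to\infty$. First I would record the limit of $T_G$: since $\alpha\in(1,2)$, the power law and the tail-balancing condition in Assumption \ref{assm} (with parameter $p$ from \eqref{eq:tail_balancing}) imply, via the point-process convergence underlying \citet{lepage1981convergence}, that the normalized cluster sums $(A_G^{-1}W_g)_{g=1}^G$ ranked by absolute value converge jointly to $(\epsilon_k\Gamma_k^{-1/\alpha})_{k\ge1}$, where $\Gamma_k=\sum_{j\le k}E_j$ are Poisson arrival times built from i.i.d.\ standard exponentials $(E_j)$, and $(\epsilon_k)$ are i.i.d.\ signs with $P(\epsilon_k=1)=p$, independent of $(\Gamma_k)$. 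A continuous-mapping argument, together with the asymptotic negligibility of the centering $N_g\hat\theta$ in the self-normalized statistic (handled exactly as in the proof of Lemma \ref{theorem:main}), then identifies the continuous limit $J^*$ of Lemma \ref{theorem:main} as the law of $\bigl(\sum_k\epsilon_k\Gamma_k^{-1/\alpha}\bigr)\big/\bigl(\sum_k\Gamma_k^{-2/\alpha}\bigr)^{1/2}$.

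The crux is the conditional analysis of $T_G^*$ given $Y_{1:G}$. Because $(v_g^*)^2=1$ and, conditionally, $\hat\theta^*=N^{-1}\sum_g v_g^*W_g=O_p(A_G/N)=o_p(1)$, the centering $N_g\hat\theta^*$ is again of smaller order than the dominant extreme terms, so $T_G^*$ behaves like $\bigl(\sum_g v_g^*W_g\bigr)\big/\bigl(\sum_g W_g^2\bigr)^{1/2}$. Conditioning on the data freezes the realized configuration of normalized order statistics at its near-limit value $(\epsilon_k\Gamma_k^{-1/\alpha})$, while the $v_g^*$ supply fresh independent symmetric signs. Splitting the sum into its top $K$ extreme terms and a remainder, bounding the remainder uniformly in $G$ by a LePage truncation so that the ``bulk'' of small clusters is conditionally asymptotically negligible, and then letting $K\to\infty$, I would show that, as a random probability measure, the conditional law of $T_G^*$ given $Y_{1:G}$ converges in distribution to the \emph{random} law
\[
\widetilde J_\omega=\text{law of }\ \frac{\sum_k\widetilde\epsilon_k\,\Gamma_k^{-1/\alpha}(\omega)}{\bigl(\sum_k\Gamma_k^{-2/\alpha}(\omega)\bigr)^{1/2}}\quad\text{under }(\widetilde\epsilon_k)\ \text{i.i.d.\ Rademacher,}
\]
where $\widetilde\epsilon_k=v^*_{(k)}\epsilon_k$ stay i.i.d.\ symmetric because $v^*_{(k)}$ is symmetric and, conditionally, $\epsilon_k$ is a fixed sign. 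The essential point, consistent with Remark \ref{remark:impossibility}, is that the bootstrap re-randomizes only the \emph{signs} of the cluster contributions but keeps their \emph{magnitude configuration} $(\Gamma_k^{-1/\alpha}(\omega))$ frozen at the realized value.

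Inconsistency then follows from the mismatch between $\widetilde J_\omega$ and $J^*$. The limit $\widetilde J_\omega$ is symmetric about $0$ for every $\omega$, whereas $J^*$ is asymmetric whenever $p\neq1/2$; picking a continuity point $t_0$ of $J^*$ at which symmetry is violated yields $\bigl|P(T_G^*\le t_0\mid Y_{1:G})-J^*(t_0)\bigr|\not\to_p0$. For the remaining case $p=1/2$, and in fact for all $p$ at once, I would instead use that $\widetilde J_\omega$ is genuinely random: the relative weight $\Gamma_1^{-2/\alpha}/\sum_j\Gamma_j^{-2/\alpha}$ of the largest term is a non-degenerate random variable, so $\omega\mapsto\widetilde J_\omega(t_0)$ has strictly positive variance at a suitable continuity point $t_0$. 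Hence $P(T_G^*\le t_0\mid Y_{1:G})$ converges in distribution to a non-degenerate limit and cannot converge in probability to the constant $J^*(t_0)$, giving $\sup_t|P(T_G^*\le t\mid Y_{1:G})-P(T_G\le t)|\not\to_p0$.

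I expect the main obstacle to be the second step: establishing convergence in distribution of the conditional (random) laws and, in particular, showing that in the infinite-variance regime the contribution of the non-extreme clusters is conditionally negligible. This requires a LePage truncation that is uniform in $G$ and a careful control of the self-normalizing denominator so that the extreme terms genuinely dominate both numerator and denominator, which is what pins down the frozen magnitude configuration that the Rademacher signs cannot repair.
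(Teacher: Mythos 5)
Your proposal follows essentially the same route as the paper's proof: the LePage series representation of the ranked, normalized cluster sums, the observation that conditioning on the data freezes the magnitude configuration so that the wild bootstrap re-randomizes only the signs, a truncation argument (top extremes plus a conditionally negligible remainder) for the conditional weak convergence, and the conclusion that the resulting \emph{random} conditional limit law cannot coincide with the fixed unconditional limit; the paper formalizes the ``freezing'' step via a Skorohod representation in $\Real^\infty$, which is the device you anticipated needing. One correction: the limit $J^*$ you display omits the LePage centering terms, i.e.\ the numerator should be $\sum_{k}\{\epsilon_k Z_k-(2p-1)\E[Z_k\1(Z_k<1)]\}$ as in Lemma \ref{theorem:main}, since the uncentered series $\sum_k \epsilon_k \Gamma_k^{-1/\alpha}$ diverges almost surely whenever $p\neq 1/2$ (as $\sum_k k^{-1/\alpha}=\infty$ for $\alpha>1$); with this fix your asymmetry argument for $p\neq 1/2$ and your non-degeneracy argument for general $p$ go through unchanged.
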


A proof can be found in Appendix \ref{sec:proof:theorem:bootstrap}. Note that the proof strategy for the numerator of the test statistic is closely related to the approach taken in \cite{cavaliere2013wild}, which demonstrates that a suitably modified wild bootstrap can be valid for symmetrically distributed i.i.d. sample means with infinite variance.

\subsection{Details of Section \ref{sec:fragility:heuristic}}\label{sec:appendix:self_normalized}
%%%%%%%%%%%%%%%%%%%%%%%%%%%%

Section \ref{sec:fragility:heuristic} argues that the self-normalized CLT may or may not hold under our framework.
This appendix section presents details of this argument. 

For the estimand $\theta = \mathbb{E}[Y_{gi}]$ for simplicity, consider the estimator
\begin{equation*}
\hat{\theta}=\frac{1}{N}\sum_{g=1}^{G}S_{g},
\end{equation*}
where $S_{g}=\sum_{i=1}^{N_{g}}Y_{gi}$ and $N=\sum_{g=1}^{G}N_{g}$. 
For simplicity, assume that $Y_{gi}$ is identically distributed with mean zero and variance one, and that $Y_{gi}$ is independent from $N_g$.
Also, assume the cluster-sampling framework in which observations are independent across $g$. 
Let $\Omega _{N}$ denote the variance of $\sqrt{N}\hat{\theta}$, i.e., $\mathbb{E}[ N\hat{\theta}^{2}]$.

We now consider three cases of within-cluster dependence: (i) $Y_{gi}$ is i.i.d. across $i$ within each $g$ (i.e., no cluster dependence); (ii) $Y_{gi}=Y_{gj}$ for all $i$ and $j$ within the same cluster (i.e., the strongest form of cluster dependence); and (iii) a combination of the cases (i) and (ii).

\begin{description}
\item[Case (i)] 
Suppose that $Y_{gi}$ is i.i.d. across $i$.
The self-normalized CLT considers
\begin{equation*}
\left(\mathbb{E}[\hat{\theta}^{2}]\right)^{-1/2}\hat{\theta} \overset{d}{\rightarrow }\mathcal{N}\left( 0,1\right).
\end{equation*}
Since
$
\mathbb{E}[ N\hat{\theta}^{2} ] =\mathbb{E}\left[ Y_{gi}^{2}\right]=1
$
under the independence across $i$ and $g$, we have
\begin{equation}\label{eq:self_normalized}
\left(\mathbb{E}[\hat{\theta}^{2}]\right)^{-1/2}\hat{\theta}
=
\sqrt{N}\hat{\theta}=\frac{G^{-1/2}\sum_{g=1}^{G}S_{g}}{\sqrt{\frac{1}{G}\sum_{g=1}^{G}N_{g}}}.
\end{equation}
By the law of large numbers and the assumption that $N_{g}$ is regularly varying with exponent $\alpha >1$, we have
\begin{equation*}
\frac{1}{G}\sum_{g=1}^{G}N_{g}\overset{d}{\rightarrow }\mathbb{E}\left[ N_{g}
\right] <\infty 
\end{equation*}
for the denominator of \eqref{eq:self_normalized}.
The independence within cluster implies that conditional on $\{N_g\}^G_{g=1}$, 
\begin{equation*}
\frac{1}{\sqrt{N}}\sum_{g=1}^{G}\sum_{i=1}^{N_{g}}Y_{gi}\overset{d}{\rightarrow }\mathcal{N}\left( 0,1\right) \text{.}
\end{equation*}
for the numerator of \eqref{eq:self_normalized}.
%Note that the convergence rate is $N^{-1/2}$, instead of $G^{-1/2}$.
Therefore, the self-normalized CLT still holds, but with the convergence rate being $N^{-1/2}$, instead of $G^{-1/2}$ if we treat $\{N_g\}^G_{g=1}$ as fixed sequences of constants. 
Now consider $\{N_g\}^G_{g=1}$ as random variables. 
Given the Pareto tail of $N_{g}$, we have that
\begin{equation*}
N=\sum_{g=1}^{G}N_{g} =O_{p}\left( G\right).
\end{equation*}
It follows that $G^{-1/2}\sum_{g=1}^{G}\sum_{i=1}^{N_{g}}Y_{gi}=O_{p}\left(1\right) $. 

\item[Case (ii)] Consider the case with perfect within-cluster dependence, i.e., $Y_{gi}\equiv Y_{g}$ for all $i\in \{1,...,N_{g}\}$ for each $g$.
In this case, $S_{g}=\sum_{i=1}^{N_{g}}Y_{gi}=N_{g}Y_{g}$, yielding that
\begin{equation*}
\sqrt{N}\hat{\theta}=\frac{G^{-1/2}\sum_{g=1}^{G}N_{g}Y_{g}}{\sqrt{\frac{1}{G}\sum_{g=1}^{G}N_{g}}}.
\end{equation*}%
The denominator still converges to $\sqrt{\mathbb{E}\left[ N_{g}\right] }$.
For the numerator, since $N_{g}Y_{g}$ is i.i.d. across $g$ and the two factors are independent with regularly varying tails, \citet[][Proposition 1.3.9]{mikosch1999regular} implies that the product $N_{g}Y_{g}$ also has regularly varying tail with exponent $\alpha<2$. 
Therefore, $G^{-1/2}\sum_{g=1}^{G}Z_{g} = G^{-1/2}\sum_{g=1}^{G}N_{g} Y_{g}$ is no longer $O_{p}\left( 1\right) $. 
More specifically, $\text{Var}[ G^{-1/2}\sum_{g=1}^{G}N_{g} Y_{g}]$ is equal to $\text{Var}[N_g]\cdot \text{Var}[Y_g]$, which is infinite given $\alpha<2$. 
In fact, \citet[][Theorem 1]{Geluk2000} implies that if the distribution of $N_gY_g$ is $\alpha$-stable, under some sequences of constants $a_G \simeq n^{1/\alpha} \rightarrow \infty$ and $b_G \in \mathbb{R}$, the limiting distribution
\begin{equation*}
\lim_{G \rightarrow \infty} P\left( \frac{1}{a_{G}} \sum_{g=1}^G S_g - b_G > x \right)
\end{equation*}
has the characteristic function
\begin{equation*}
\psi _{\alpha}\left( s\right) 
=
\exp \left\{ -\left( \left\vert s\right\vert ^{\alpha}+is \left( 1-\alpha\right) \tan(\alpha\pi/2) \frac{\left\vert s\right\vert ^{\alpha -1}-1}{\alpha -1}\right) \right\}. 
\end{equation*}
Thus, the CLT fails, and the asymptotic distribution will be non-Gaussian.
Therefore, even the jackknife standard error fails in this scenario. 
See, for example, Figures 5 and 6 in \citet{MaNiWe2022fast}.

\item[Case (iii)] Combining the above two cases, we now consider 
\begin{equation*}
Y_{gi}=\rho _{G}R_{g}+U_{gi},
\end{equation*}
where $R_{g}$ can be thought as a cluster-specific random effect and $U_{gi}$ is a random noise, which is i.i.d. across both $i$ and $g$. 
The normalizing constant $\rho _{G}$ determines the weights of $R_{g}$ in $Y_{gi}$. 
Under this setting, we have
\begin{align}
\sqrt{N}\hat{\theta} =&\frac{G^{-1/2}\sum_{g=1}^{G}S_{g}}{\sqrt{\frac{1}{G}\sum_{g=1}^{G}N_{g}}} \notag
\\
=&\frac{G^{-1/2}\rho _{G}\sum_{g=1}^{G}N_{g}R_{g}}{\sqrt{\frac{1}{G} \sum_{g=1}^{G}N_{g}}}+\frac{G^{-1/2}\sum_{g=1}^{G}\sum_{i=1}^{N_{g}}U_{gi}}{\sqrt{\frac{1}{G}\sum_{g=1}^{G}N_{g}}}. \label{eq:two part}
\end{align}
Following the same arguments as those in Case (ii), the first item above is asymptotically non-Gaussian (after some suitable normalization), but the second term is asymptotically normal. 
The orders of magnitudes of them depend on the distribution of $\left( R_{g},N_{g},U_{gi}\right) $. 
For example, if $\mathbb{E}\left[ R_{g} \right] =0 $ and $\mathbb{E}\left[ R_{g}^{2}\right] <\infty $, then $N_gR_g$ again has a regularly varying tail with exponent $\alpha<2$ \citep[e.g.,][Theorem 3]{EmGo80}. 
The generalized central limit theorem \citep[e.g.,][Chapter 11]{Ib13} implies that
$
\sum_{g=1}^{G}N_{g}R_{g} \simeq_p G^{1/\alpha }.
$
For the second term in \eqref{eq:two part}, Case (i) derives that $G^{-1/2}\sum_{g=1}^{G}\sum_{i=1}^{N_{g}}U_{gi}=O_{p}\left(1\right) $. 
The non-Gaussian part then dominates the normal part if $\rho_G G^{1/\alpha-1/2}\rightarrow\infty$ as $G\rightarrow\infty$. 
Since $\alpha<2$, a constant $\rho_G$ will satisfy this condition. 
\end{description}

As a final remark, we note that Assumption 3 in \citet{DjMaNi19} could relax the condition on $N_g$ into that $\max_{1\le g\le G} N_g/N \rightarrow 0$ when the within-cluster dependence is strong. 
The stochastic counterpart of this assumption fails under our framework where $N_g$ is treated as a random variable. 
More specifically, consider Case (ii) again for illustration. 
Let $\mu_N$ denote the reciprocal of the variance of $\hat{\theta}$ conditional on $\{N_g\}^G_{g=1}$ as in \citet{DjMaNi19}. 
The above derivation yields 
\begin{align*}
\text{Var}[\hat{\theta}|\{N_g\}^G_{g=1}] 
= &\, \frac{\sum^G_{g=1}N_g^2\text{Var}[Y_{gi}]}{(\sum^G_{g=1}N_g)^2} \\
= &\, \frac{\sum^G_{g=1}N_g^2\text{Var}[Y_{gi}]G^{-2}}{(G^{-1}\sum^G_{g=1}N_g)^2} \\
\simeq_p &\, G^{2/\alpha-2},
\end{align*}
and hence $\mu_N \simeq_p G^{2-2/\alpha}$. 
Therefore, for any constant $\lambda>0$, we have 
\[
\mu_N^{\frac{2+\lambda}{2+2\lambda}}\frac{\sup_g N_g}{N} \simeq_p G^{\rho(\lambda)},
\]
where $\rho(\lambda) = (2-2/\alpha)[(2+\lambda)/(2+2\lambda)-1/2]$. 
Recall $\alpha \in (1,2)$, yielding that $\rho(\lambda)>0$ for all $\lambda>0$.
Then, the above term diverges with probability approaching one. 

\section{Discussions on Uniformity}\label{sec:theory:uniform}

In the main text, all asymptotic properties are derived under a fixed data-generating process (DGP).
This appendix extends the analysis to uniform size control over a class of DGPs.
Without uniformity, for any given $G$ (regardless of its magnitude), there may exist a DGP $P_G$ such that the rejection probability under the null fails to converge to the nominal level.
Uniformity therefore guarantees more reliable inference in finite samples, particularly when $G$ is moderate.
To simplify the notations and assumptions, we focus on inference for the mean of a scalar random variable in the current subsection.

Consider a triangular array setup: for each $G\in \mathbb N$, suppose that we have an i.i.d. sequence $(S_g)_{g=1}^G=(S_{g,G})_{g=1}^G$, whose distribution is now $P = P_G$. 
Recall that  
\begin{align*}
(\hat \delta -\delta)=\frac{1}{G} \sum_{g=1}^G  S_g \quad\text{ and }\quad \hat\sigma^2= \frac{1}{G}\sumg \hat S_g^2,
\end{align*}
where $\hat S_g= S_g- G^{-1}\sumg S_g$. The test statistic of interest is again the t-ratio $(\hat \delta - \delta)/\hat\sigma$.
Henceforth, we will let $\E_P[\cdot]$ denote the expectation with respect to the DGP, $P$, if we are to emphasize such a dependence.
For any  $\varepsilon\in[0,1)$,
define $\mathbf P_1(\varepsilon)$ as the set of all the DGPs, $P$, such that there exist some $p\in [0,1]$ and $\alpha\in [1+\varepsilon,2)$ such that
\begin{align}
&\lim_{t\to \infty}\frac{P(S_g>t)}{P(|S_g|>t)}=p, \quad\text{ and }\label{cond:balanced_tails}\\
&P(|S_g|>t)=t^{-\alpha} L_P(t)\quad \text{as } t\to\infty\label{cond:regular_varying}
\end{align}
for an $L_P(\cdot)$ slowly varying at $\infty$ that can depend on $P = P_G$, and $\E_P[S_g]=0$ when it is defined.
In addition, define $\mathbf P_2$ as the set of all DGPs satisfying $\E_P[S_g]=0$ and the following uniform integrability condition
\begin{align*}
&\lim_{\lambda\to \infty}\sup_{P\in \mathbf P_2} \E_P\left[\frac{|S_g-\E_P[S_g]|^2}{\sigma^2(P)}\1\left\{\frac{|S_g-\E_P[S_g]|}{\sigma(P)}>\lambda\right\}\right]=0,
\end{align*}
where $\sigma^2(P)=\E_P[S_g^2]$ is finite.
Finally, define $\mathbf P(\varepsilon)=\mathbf P_1(\varepsilon)\cup\mathbf P_2$. 
The first set $\mathbf P_1(\varepsilon)$ covers the DGPs with heavy tail distributions and with regularly varying tail probabilities so that the variances of $S_g$ are infinite. 
The second set $\mathbf P_2$ covers a rich subset of DGPs in which the variances of $S_g$ are always finite and contains, in particular,  the set of DGPs with $2+\epsilon$ moments for any $\epsilon>0$.  It rules out certain examples such as those in the classical Bahadur-Savage example under which the $t$-test fails its size control for every sample size; see \cite{romano2004non} for more details.

First, we note that when $\alpha=1$, the t-ratio does not converge in distribution in general, except in very special situations. The following is a direct implication of \citet[p. 798]{logan1973limit}. 
\begin{proposition}\label{prop:Cauchy}
When $\alpha=1$ in \eqref{cond:regular_varying}, the t-ratio
$
(\hat \delta-\delta)/\hat \sigma
$ converges weakly to a nondegenerate limiting distribution only if $S_g$ belongs to either the domain of attraction of a Cauchy law or a translate of Cauchy law. Hence, no confidence set constructed using quantiles of the asymptotic distribution of the t-ratio can achieve uniform size control over $\mathbf{P}(0)$.
\end{proposition}

Nonetheless, we show a next best result holds true: 
 the proposed inference can control size uniformly over the set $\mathbf P(\varepsilon)$ if $\varepsilon>0$. 
  Recall that if the weight vectors \( (w_1^j,...,w_G^j) \) in the bootstrap are instead generated following sampling $b$ out of $G$ unit without placement, then the proposed procedure becomes subsampling.
Denote the empirical CDFs for the complete subsampling
\begin{align*}
L_G(x,P)=&\frac{1}{B_G}\sum_{j=1}^{B_G}\1\left\{\frac{\hat \delta_{b,j} - \delta}{\hat \sigma_{b,j}}\le x\right\},\quad
\hat L_G(x)=\frac{1}{B_G}\sum_{j=1}^{B_G}\1\left\{\frac{\hat \delta_{b,j} - \hat\delta}{\hat \sigma_{b,j}}\le x\right\},
\end{align*}
where $B_G={G \choose b}$.
Further, let the $a$-th quantile of  $\hat L_G(\cdot)$ be denoted by $\hat L_G^{-1}(a)$.
%%%%%%%%%%%%%%%%%%%%%%%%%%%%%%%%%%%%%%%%%%%%%
\begin{theorem}[Uniformity]\label{theorem:uniformity}
For any $\varepsilon\in(0,1]$,	the confidence sets constructed based on the proposed subsampling procedure achieves asymptotically uniform size control over $\mathbf P(\varepsilon)$. Explicitly, for any nonnegative $a_1$ and $a_2$ such that $0\le a_1+a_2<1$, we have
	\begin{align*}
	\lim_{G\to \infty}\inf_{P\in \mathbf P} P\left(\hat L_G^{-1} (a_1) \le \frac{\hat\delta-\delta}{\hat \sigma} \le \hat L_G^{-1} (1-a_2) \right)=1-a_1-a_2.
	\end{align*}	
    Furthermore, if $b^2/G=o(1)$, the same conclusion holds for CS bootstrap in place of subsampling.
\end{theorem}
%%%%%%%%%%%%%%%%%%%%%%%%%%%%%%%%%%%%%%%%%%%%%%
A proof can be found in Appendix \ref{sec:proof:theorem:uniformity}.
The proof utilizes the general results in \cite{romano2012uniform} under high-level conditions together with our Lemma \ref{lemma:weak_conv_varying_alpha_p} in Appendix \ref{sec:proof:theorem:uniformity}. 
This new lemma establishes a novel convergence in distribution result for row-wise i.i.d. triangular arrays. 
Specifically, we consider the sequence of indices $\alpha_G \to \alpha_0 \in [1+\varepsilon,2]$ as $G\to \infty$, covering the cases with both normal ($\alpha_0 = 2$) and non-Gaussian ($\alpha_0 < 2$) limiting distributions. 
Recall that the t-test is not uniformly valid over the set of all DGPs with finite second moments, while it controls size uniformly over the set of all DGPs with finite $2+\epsilon$ moments for any $\epsilon>0$ (see e.g. \citealt{romano2004non}). 
Our result with $\mathbf{P}(\varepsilon)$ for all $\varepsilon>0$ is analogous to this classic result, although it extends the scope of uniformity to a much larger class of DGPs with potentially infinite second moments and non-Gaussian limiting distributions.

Furthermore, we note that it is likely that the additional condition $b^{2}/G=o(1)$ imposed for the CS bootstrap can be removed by appealing to, e.g. the generic results of \citet{andrews2020generic}. The present argument, for its simplicity, relies on exploiting the asymptotic equivalence between subsampling and the $m$-out-of-$n$ bootstrap under the regime $m^{2}/n =o(1)$, together with the high-level uniformity result for subsampling in \citet{romano2012uniform}, and is therefore not optimally tailored to the CS bootstrap setting. We leave a refinement of the proof along these lines for future research.

Finally, it is noteworthy that our uniform size control property exhibits resemblances to certain instances in the existing literature. 
An example is the unit root model presented in Example 1 of \cite{andrews2020generic}, where uniform size control persists across DGPs leading to either normal or non-Gaussian limiting distributions. 
In that example, \cite{andrews2020generic} demonstrate the continuity of their limiting distribution in a local parameter $h$ throughout its support, akin to the role served by our nuisance parameters $(\alpha, p)$ in our asymptotic theory. 
Notably, while infinite variance poses no hindrance in \cite{andrews2020generic}, its presence significantly complicates the analytical framework within our study. 
To the best of our knowledge, Theorem \ref{theorem:uniformity} stands as the first theoretical result addressing the uniformity property of subsampling or bootstrap for statistical models that may exhibit potentially infinite variance.

%%%%%%%%%%%%%%%%%%%%%%%%%%%%
\section{Mathematical Proofs}\label{sec:proof}
%%%%%%%%%%%%%%%%%%%%%%%%%%%%

This section collects all the mathematical proofs. 
The order in which the proofs appear differs from that of the corresponding statements in the main text for the following reasons. 
The proof of Theorem~\ref{cor:thin_tail} relies on Lemma~\ref{theorem:main}, and hence we present the proof of Lemma~\ref{theorem:main} before that of Theorem~\ref{cor:thin_tail}. 
Similarly, the proof of Theorem~\ref{thm:iff} depends on both Lemma~\ref{theorem:main} and Theorem~\ref{cor:thin_tail}, so we present those results before turning to Theorem~\ref{thm:iff}. 
Proofs for all remaining theorems are given in the order of appearance of their corresponding statements, namely, Theorem~\ref{theorem:bootstrap} (failure of the wild bootstrap) and Theorem~\ref{theorem:uniformity} (uniformity). 
Some additional technical lemmas are relegated to Appendix~\ref{sec:auxiliary_Lemmas}.

%%%%%%%%%%%%%%%%%%%%%%%%%%%%
\subsection{Proof of Lemma \ref{theorem:main}}\label{sec:proof:theorem:main}
%%%%%%%%%%%%%%%%%%%%%%%%%%%%

\begin{proof}[Proof of Lemma \ref{theorem:main}]
We establish the result for both the CS bootstrap and subsampling.
Without loss of generality, suppose that $X_{gi}$ is a scalar and $r=1$, and hence $\delta=\theta$.
The proof is divided into two steps.
In the first step, we derive the asymptotic distribution of the self-normalized sums that consist of the linear component of the influence function of the estimator. 
In the second step, we derive the validity of the proposed CS bootstrap.

\noindent \textbf{Step 1.}
Recall that 
\begin{align*}
\hat \theta -\theta=\left(\sum_{g=1}^G X_g'X_g \right)^{-1} \sum_{g=1}^G  S_g.
\end{align*}
We shall derive the asymptotic distribution for the following self-normalized sums of the linear component $\sumg S_g$:
\begin{align}
SN_{1G}(\theta):=\frac{\sum_{g=1}^G  S_g}{\sqrt{\sum_{g=1}^G  S_g^2}},\qquad
SN_{2G}(\theta):=\frac{\sum_{g=1}^G  S_g}{\sqrt{\sum_{g=1}^G  \hat S_g^2}},\label{eq:SN_sums}
\end{align}
where $\hat S_g= X_g'\hat U_g$.
The asymptotic distribution of a properly re-scaled $(\hat \theta - \theta)$ will then follow straightforwardly from the multiplication of $Q^{-1}$ on both the numerator and the denominator.
Since $\alpha \in(1,2)$, Corollary 1 in \cite{lepage1981convergence} yields
{ 
\begin{align}
SN_{1G}(\theta)\stackrel{d}{\to}  \frac{\sum_{k=1}^\infty\{\epsilon_k Z_k -(2p-1)\E[Z_k\1(Z_k < 1)]\} }{\sqrt{\sum_{k=1}^\infty Z_k^2}}\label{eq:SN_dist}
\end{align}
as $G\to \infty$,
where 
\begin{align*}
	p=\lim_{t\to \infty} \frac{P\left(S_g>t\right)}{P\left(\left|S_g\right|>t\right)},
\end{align*}
	%\footnote{The concrete values of the constants $c_1$ and $c_2$ do not matter for our purposes -- see \citet{lepage1981convergence} for details.} 
	$Z_k=(E_1+...+E_k)^{-1/\alpha}$ for each $k$, $\{E_k\}_k$ are i.i.d. standard exponential random variables, and $\{\epsilon_k\}_k$ are i.i.d. random variables that take the value of $1$ with probability $p$ and $-1$ with probability $(1-p)$ and are independent of $\{Z_k\}_k$.

}
We now claim that $SN_{2G}(\theta)$ converges in distribution to the same limiting distribution as \eqref{eq:SN_dist}. 
By Theorems 1 and 1$'$ in \cite{lepage1981convergence},
\begin{align}
&\left(\frac{1}{A_G}\sumg S_g,\frac{1}{A_G^2}\sumg S_g^2 \right)\nonumber\\
\stackrel{d}{\to }&(S,V):=\left(\sum_{k=1}^\infty\{\epsilon_k Z_k -(2p-1)\E[Z_k\1(Z_k < 1)]\},\sum_{k=1}^\infty Z_k^2\right)=O_p(1)\label{eq:rates_1}
\end{align}
holds for $A_G=G^{1/\alpha} L_1(G)$, where
%%%%%%%%%%%%%%%%%%%%%%%%
%{ {YW: a minor question: should the normalizing constant $a_G$ be $G^{1/\alpha}L_1(G)^{1/\alpha}$ instead of $G^{1/\alpha}L_1(G)$? }
%	[HDC: Yes you are right. They are now fixed.]}
%%%%%%%%%%%%%%%%%%%%%%%%
$Z_k$, $\epsilon_k$, and $p$ are defined below Equation \eqref{eq:SN_dist}, and $L_1(\cdot)$ is slowly varying at $\infty$; and
\begin{align}
& \frac{1}{(A_G')^2}\sumg (X_g'X_g)^2\stackrel{d}{\to}\sum_{k=1}^\infty \tilde Z_k^2=O_p(1) \label{eq:rates_2}
\end{align}
holds
where $A_G'=G^{1/\alpha}L_2(G)$, $\tilde Z_k=(\tilde E_1+...+\tilde E_k)^{-1/\alpha}$ for each $k$, $\{\tilde E_k\}_k$ are i.i.d. standard exponential random variables, and $L_2(\cdot)$ is slowly varying at $\infty$.
Because $\alpha \in (1,2)$ and $L_1$ is slowly varying at $\infty$, Equation \eqref{eq:rates_1} implies the consistency
%{ [YS: We need to mention an appropriate restriction on $L_1$ to derive $\downarrow$
%HDC: Note that $L_1$ is slowly varying at $\infty$ implies that $G^{-c}L_1(G) \to 0$ as $G\to\infty$ for any $c>0$ (implied by Karamata's representation).
%So
%	\begin{align*}
%	\|\hat \theta - \theta\|=&\left\|\underbrace{\left(\frac{1}{G}\sumg X_g'X_g\right)^{-1}}_{\stackrel{p}{\to} Q}\frac{A_G}{G}\left(\frac{1}{A_G}\sumg S_g\right)\right\|\\
%	=&O_p(1) \frac{G^{1/\alpha} L_1(G)}{G} O_p(1)
%	=O_p(L_1(G)G^{-(1-1/\alpha)})=o_p(1).
%	\end{align*}
%	]
%
%YW: See also Proposition B.1.9 in de Haan and Ferreira (2006). In particular, if some function $f \in RV_{c}$ for $c \in \mathbb{R}$, then $\log f(t)/\log t \rightarrow c$ as $t\rightarrow \infty$. 
%This implies that slowly varying functions ($c=0$) diverges slower than any polynomial function. 
%	}
\begin{align}\label{eq:rate}
\|\hat \theta - \theta\|=\left\|\left(\sumg X_g'X_g\right)^{-1}\sumg S_g\right\|=O_p(L_1(G)G^{-(1-1/\alpha)})=o_p(1)
\end{align}
under Assumption \ref{assm}.
Using $
\hat U_g=U_g+X_g(\theta - \hat \theta)$ and
$\hat S_g =S_g+  X_g'X_g(\theta - \hat \theta)$, where $\hat U_g=(\hat U_{g1},...,\hat U_{ gN_g})'$,
 we can write
\begin{align*}
\frac{1}{A_G^{2}}\sumg\hat S_g^2 = & \frac{1}{A_G^{2}}\sumg S_g^2+\frac{1}{A_G^{2}}\sumg \left(\hat S_g - S_g \right)\hat S_g+ \frac{1}{A_G^{2}}\sumg S_g \left(\hat S_g -S_g\right)\\
=&\frac{1}{A_G^{2}}\sumg S_g^2+(1) + (2).
\end{align*}
We are going to show that the terms (1) and (2) are $o_p(1)$.
First,
\begin{align*}
\|(1)\|=&\left\|\frac{1}{A_G^{2}}\sumg (S_g+X_g'X_g(\theta  -\hat\theta))(X_g'X_g(\theta  -\hat\theta))'\right\|\\
\le& \left\|\frac{1}{A_G^{2}}\sumg S_g X_g'X_g\right\|\|\hat \theta - \theta\| + \left\|\frac{1}{A_G^{2}}\sumg (X_g'X_g)^2 \right\|\|\hat \theta - \theta\|^2\\
\le&
\underbrace{\sqrt{\frac{1}{A_G^{2}}\sumg S_g^2}}_{=O_p(1)} 
\underbrace{\sqrt{
	\frac{1}{A_G^{2}}\sumg (X_g'X_g)^2}}_{=O_p(1)}
\underbrace{ \|\hat \theta - \theta\| }_{=o_p(1)}
+
\underbrace{	\frac{1}{A_G^{2}}\sumg (X_g'X_g)^2}_{=O_p(1)}
	\underbrace{ \|\hat \theta - \theta\|^2 }_{=o_p(1)}
	\\
%	=&O_p\left(L_1(G)G^{-(1-1/\alpha)}\right)+O_p\left(L_1(G)^2G^{-(2-2/\alpha)}\right)
=&o_p(1)
%=&\left\|\frac{1}{G^{2/\alpha}}\sumg S_g \right\|\cdot \max_{g=1,...,G}\left\|\sum_{i=1}^{N_g}X_{gi}X_{gi}'\right\|\cdot O_p\left(\frac{1}{G^{1-1/\alpha}}\right)\\
%&+
%\left\|\frac{G}{G^{2/\alpha}}\frac{1}{G}\sumg X_g'X_g  \right\|\cdot \max_{g=1,...,G}\left\|\sum_{i=1}^{N_g}X_{gi}X_{gi}'\right\|\cdot
%O_p\left(\frac{1}{G^{2-2/\alpha}}\right)\\
%=&O_p\left(\frac{1}{G^{1/\alpha}}\right) \max_{g=1,...,G}\left\|\sum_{i=1}^{N_g}X_{gi}X_{gi}'\right\| O_p\left(\frac{1}{G^{1-1/\alpha}}\right)
%+ O_p(1)\max_{g=1,...,G}\left\|\sum_{i=1}^{N_g}X_{gi}X_{gi}'\right\| O_p\left(\frac{G}{G^{2}}\right)\\
%=&
%O_p\left(\frac{\kappa_G\cdot \max_{g=1,...,G} N_g}{G}\right)+O_p\left(\frac{\kappa_G \cdot \max_{g=1,...,G} N_g}{G}\right)=O_p\left(\frac{\kappa_G \cdot \max_{g=1,...,G} N_g}{G}\right).
\end{align*}
holds,
where the second inequality follows from the Cauchy-Schwarz inequality and the stochastic orders are due to Equations \eqref{eq:rates_1}, \eqref{eq:rates_2}, and \eqref{eq:rate}.  
Second, similar lines of calculations yield
\begin{align*}
\left\|(2)\right\|=&
\left\|\frac{1}{A_G^{2}}\sumg S_g(X_g'X_g ( \theta - \hat\theta))'\right\|	%=O_p\left(L_1(G)G^{-(1-1/\alpha)}\right)
=o_p(1).
\end{align*}
%Now, it turns out that
%\begin{align*}
%\frac{\kappa_G \cdot \max_{g=1,...,G} N_g}{G}\sim_p \frac{\kappa_G \cdot \max_{g=1,...,G} N_g}{N}.
%\end{align*}
%To see this,
%note that $N=\sumg N_g$ and thus $\E[N]=\sumg \E[N_g]= cG$ and WLLN implies $N/G = G^{-1}\sumg N_g=c+o_p(1)$ as we assume that $\E[N_g]=c\in (0,\infty)$, the LHS above is then equivalent to
%\begin{align*}
%\frac{c\cdot\kappa_G\cdot\max_{g=1,...,G} N_g}{N}=o_p(1).
%\end{align*}
We have now established that
\begin{align*}
\frac{1}{A_G^{2}}\sumg\hat S_g^2=\frac{1}{A_G^{2}}\sumg S_g^2+o_p(1),
\end{align*}
and consequently, $SN_{1G}(\theta)$ is asymptotically equivalent to $ SN_{2G}(\theta)$. 
\bigskip 

\noindent \textbf{Step 2.}
We next show the validity of the CS bootstrap and subsampling. { Define the regular bootstrapped estimator
\begin{align*}
\check \theta_{b,j}=&\left(\sum_{g=1}^G w_g^jX_g'X_g\right)^{-1}\sum_{g\in \mathcal S_j}w_g^j X_g'Y_g.
\end{align*}
Since $B^{-1}-A^{-1}=A^{-1}(A-B)B^{-1}$, we have
\begin{align*}
\check \theta_{b,j}-\hat\theta_{b,j}=&\left(\sum_{g=1}^G w_g^j X_g'X_g\right)^{-1}\sum_{g=1}^G w_g^j X_g'Y_g-\left(\frac{G}{b}\right)  \left( \sum_{g=1}^G  X_g'X_g\right)^{-1}\sum_{g=1}^G w_g^j X_g' Y_g\\
=&\left(\frac{1}{G}\sumg X_g'X_g\right)^{-1}\left(\frac{1}{G}\sumg X_gX_g -\frac{1}{b}\sumg w_g^j X_g'X_g\right)\left(\frac{1}{b}\sumg w_g^j X_g'X_g\right)^{-1} \frac{1}{b} \sumg w_g^j X_g'Y_g\\
=& o_p(1) \cdot \check \theta_{b,j},
\end{align*}
where 
\[\left\|\frac{1}{G}\sumg X_gX_g -\frac{1}{b}\sumg w_g^j X_g'X_g\right\|=o_p(1)\]
follows from an application of the main theorem in \cite{csorgo1992law} under his Condition (1.5).
This implies
$\hat\theta_{b,j}=\check\theta_{b,j}(1+o_p(1))$. Therefore, in the bootstrap/subsampling process, $\check\theta_{b,j}$ can be replaced by $\hat\theta_{b,j}$ without changing the asymptotic behavior. Thus, it suffices to establish the validity of the CS bootstrap procedure based on the conventional $m$-out-of-$n$ estimator $\check\theta_{b,j}$. }

Now,
since the stable distributions $S$ and $V$ defined in the previous step are both continuous and $V>0$ with probability $1$, $S/V^{1/2}$ is continuously distributed and $J^*(\cdot)$ is continuous.
Hence, by invoking Theorem 4.1 in \cite{arcones1991additions}, we have
\begin{align*}
\sup_{t\in \Real}|\hat L_{G,b}(t)-J_G^*(t)|=o_p(1)
\end{align*}
as $M\to \infty$ and $G\to \infty$. Step 1, the triangle inequality, and the continuity of $J^*(\cdot)$   then imply
\begin{align*}
\sup_{t\in \Real}|\hat L_{G,b}(t)-J^*(t)|=o_p(1)
\end{align*}
This concludes the proof for the bootstrap.

The proof for subsampling proceeds in the same way above with with \cite{csorgo1992law} no longer needed and Theorem 4.1 in \cite{arcones1991additions} replaced by Theorem 11.3.1 in \cite{politis1999subsampling}.
\end{proof}

%%%%%%%%%%%%%%%%%%%%%%%%%%%%
\subsection{Proof of Theorem \ref{cor:thin_tail}}\label{sec:proof:cor:thin_tail}
%%%%%%%%%%%%%%%%%%%%%%%%%%%%

\begin{proof}[Proof of Theorem \ref{cor:thin_tail}]
We establish the result for both the CS bootstrap and subsampling.
The case of $\alpha<2$ follows directly from Lemma \ref{theorem:main}.	
For $\alpha=2$, the proof is similar to the proof of Lemma \ref{theorem:main} with the following minor modifications. 	
First, when $\alpha=2$, $S_g$ is in the domain of attraction of the normal distribution and hence Theorem 3.4 in \cite{gine1997student} yields 
	\begin{align*}
	SN_{1G}(\theta) \stackrel{d}{\to} \mathcal{N}(0,1).
	\end{align*}
	Second, to show the asymptotic equivalence of $SN_1(\theta)$ and $SN_2(\theta)$, note that both $S_g$ and $(X_g'X_g)$ belong to the domain of attraction of the normal law when $\alpha=2$.
	We branch into two cases.
	In case that both $S_g$ and $(X_g'X_g)$ have finite variances, we have
	$$\frac{1}{G} \sumg \hat S_g^2= \frac{1}{G} \sumg  S_g^2+o_p(1)\stackrel{p}{\to} \Var(S_g)$$ 
	by following the standard argument for consistency of the CR variance estimator.
	In case their variances do not exist, Lemma 3.1 in \cite{gine1997student} yields
	\begin{align*}
	\frac{1}{A_G^2}\sumg S_g^2 \stackrel{p}{\to} 1
	\end{align*}
	for $A_G$ such that
	\begin{align*}
	\frac{1}{A_G}\sumg (S_g - \E[S_g])\stackrel{d}{\to }\mathcal{N}(0,1).
	\end{align*}
	A similar argument holds when $S_g$ is replaced by $(X_g'X_g)$.
	Then, the arguments for bounding $\|(1)\|$ and $\|(2)\|$ in the proof of Lemma \ref{theorem:main} still go through, and thus for the self-normalized sums defined in Equation \eqref{eq:SN_sums}, it holds that $SN_2(\theta)=SN_1(\theta)+o_p(1)$.
	Finally, for the validity of the CS bootstrap, we now invoke Theorem 4.1 in \cite{arcones1991additions} for the bootstrap case and Theorem 2.2.1 in \cite{politis1999subsampling} for subsampling.
 Note that the limiting distribution is normal and hence continuous.
\end{proof}

%%%%%%%%%%%%%%%%%%%%%%%%%%%%
\subsection{Proof of Theorem \ref{thm:iff}}\label{sec:proof:thm:iff}
%%%%%%%%%%%%%%%%%%%%%%%%%%%%

\begin{proof}[Proof of Theorem \ref{thm:iff}]
The if part of the statement follows from the proof of Theorem \ref{cor:thin_tail}. The only if part is a direct implication of Theorem 3.4 in \cite{gine1997student} and the fact that for any $\alpha \in (1,2]$, the self-normalized sums defined in Equation \eqref{eq:SN_sums} satisfy $SN_2(\theta)=SN_1(\theta)+o_p(1)$, as shown in the proofs for Lemma \ref{theorem:main} and Theorem \ref{cor:thin_tail}.
	\end{proof}

%%%%%%%%%%%%%%%%%%%%%%%%%%%%%%%
\subsection{Proof of Theorem \ref{theorem:data-driven}}\label{sec:theorem:data-driven}
%%%%%%%%%%%%%%%%%%%%%%%%%%%%%%%
\begin{proof}[Proof of Theorem \ref{theorem:data-driven}]
Since we can condition on the realization of $b$ and take random subsets of clusters, it suffices to show that the data-driven choice $b$ as in Algorithm \ref{alg:bickel_sakov} satisfies that as $G\to\infty$, it holds with probability approaching one that 
\begin{align}
    \hat{b} \to \infty \text{ and } \hat{b}/G \to 0.  
\end{align}
It suffices to prove the statement for the univariate self-normalized sum 
\[
    T_G = \frac{\sum_g S_g}{\sqrt{\sum_g S_g^2}},
\]
as the general case follows under Assumption \ref{assm}. In particular, under Assumption~\ref{assm}, 
the design matrix is consistent for its limit
and, as shown in the proof of Lemma \ref{theorem:main}, replacing $S_g$ with $\hat S_g$ does not affect the asymptotic distribution.

Note that because the function $\phi$ used in the construction of $b_\ell$ is sublinear in Algorithm \ref{alg:bickel_sakov}, we have $\hat b/G \leq \lceil q \cdot \phi(G) \rceil / G \to 0.$
Then it remains to show $\hat b \to\infty$ with probability $1-o(1)$.

To this end, we follow the same argument as in the first half of the proof of Theorem~1 in \cite{bickel2008choice}, from the beginning up to their Equation~(22), by verifying Conditions (A.1)--(A.4) in that paper.  Also notice that excluding the case $b \sim G$, as in our Algorithm \ref{alg:bickel_sakov}, will not affect these parts of the argument.

    Condition (A.1) is immediate following the fact that with probability one, the mapping from data to $T_{G,b}$ is continuous.
    Conditions (A.2) and (A.3) are shown in the proof of our Lemma \ref{theorem:main}.

To check (A.4), for each $k=1,...,G$, define
\[
T_k =
\frac{\bar S_k}{\sqrt{\bar Q_k}}
=
\frac{\sum_{g=1}^{k} S_g}{\sqrt{\sum_{g=1}^{k} S_g^{2}}},
\qquad k\in\mathbb{N},
\]
and denote by \(L_k\) the CDF of \(T_k\) when \(S_1,\dots,S_k\stackrel{\text{i.i.d.}}{\sim}F\),
where \(F\) is a non-degenerate distribution in the domain of attraction of an \(\alpha\)-stable law for an $\alpha\in (1,2]$.
Condition~(A.4) requires the mapping \(k\mapsto L_k\) to be injective.
We now show that \(L_k\neq L_\ell\) whenever \(k\neq \ell\).
For any \(\mathbf{s}=(s_1,\dots,s_k)\in\mathbb{R}^k\),
Cauchy-Schwarz inequality implies that 
the support of \(L_k\) is contained in \([-\sqrt{k},\sqrt{k}]\). 
Let \(0<\varepsilon<\sqrt{k}\).
Because \(F\) is non‑degenerate, there exist \(a>0\) such that for any \(\delta \in (0,1)\), we have
\(
\bar p(\delta) :=P\bigl(S_g\in (a,(1+\delta)a)\bigr)>0.
\)
Consider the event
\[
E_k(\delta)
=
\{S_1,\dots,S_k\in(a,(1+\delta)a)\}.
\]
Observe that \(
P(E_k(a,\delta))=\bar p(\delta)^k>0.
\)
Furthermore, conditionally on \(E_k(a,\delta)\), we have
\(
\bar S_k\in(ka,k(1+\delta)a)
\)
and
\(
\bar Q_k\in(ka^{2},k(1+\delta)^2 a^{2}),
\)
so
\(
T_k
\in
(
\sqrt{k}/(1+\delta),
(1+\delta)\sqrt{k}
).
\)
Choosing $\varepsilon$ so that  \(\delta<\varepsilon/\sqrt{k}\) ensures
\(
T_k>\sqrt{k}-\varepsilon
\)
on \(E_k(\delta)\).
Hence,
\begin{equation*}
P\bigl(|T_k|>\sqrt{k}-\varepsilon\bigr)
\ge
\bar p(\delta)^k>0.
\end{equation*}
Since $\delta$ (and hence $\varepsilon$) can be made arbitrarily small, this implies that 
\(
\operatorname*{ess\,sup}|T_k|=\sqrt{k}.
\)
Take two integers \(k<\ell\) and define
\(
A_k=\{|t|>\sqrt{k}\}.
\)
Note that \(L_k(A_k)=0\).
Also, for \(\varepsilon=(\sqrt{\ell}-\sqrt{k})/2\),
\(
L_\ell(A_k)>0,
\)
so \(L_k\neq L_\ell\).
Since the argument holds for all such $k$ and $\ell$,
the map \(k\mapsto L_k\) is injective, showing that Condition~(A.4) is satisfied. 
\end{proof}
%%%%%%%%%%%%%%%%%%%%%%%%%%%%%%%

{
%%%%%%%%%%%%%%%%%%%%%%%%%%%%

%%%%%%%%%%%%%%%%%%%%%%%%%%%%%%%%%%%%%%%%%%%%%%%%
\subsection{Proof of Theorem \ref{theorem:bootstrap}}\label{sec:proof:theorem:bootstrap}
%%%%%%%%%%%%%%%%%%%%%%%%%%%%

\begin{proof}[Proof of Theorem \ref{theorem:bootstrap}]
	Write 
	\begin{align*}
	T_{G}=&\frac{S_G}{\sqrt{V_G}}:=\frac{A_G^{-1}\sumg \left(\sumi Y_{gi}\right)}{\sqrt{A_G^{-2}\sumg \left(\sumi (Y_{gi}-\hat \theta)\right)^2}}
	\qquad\text{and}\\
	T_{G}^{*}=&\frac{S_G^*}{\sqrt{V_G^*}}:=\frac{A_G^{-1}\sumg v_g^*\left(\sumi Y_{gi}\right)}{\sqrt{A_G^{-2}\sumg \left(v_g^*\sumi (Y_{gi}-\hat\theta^*)\right)^2}}.
	\end{align*}
	Let $P$ denote the probability measure for the data and $P^*$ denote the probability measure of Rademacher auxiliary random variables.
	Define
	\begin{align*}
	p=\lim_{t\to \infty} \frac{P\left(\sumi Y_{gi}>t\right)}{P\left(\left|\sumi Y_{gi}\right|>t\right)}.
	\end{align*}
	Write $W_g=\left|\sumi Y_{gi}\right|$
	and the order statistics of $W_1,...,W_G$ as follows:
	\begin{align*}
	W_{G1}\ge& W_{G2}\ge ...\ge W_{GG}.
	\end{align*}
	The rescaled counterpart is denoted by $Z_{Gg}=A_G^{-1} W_{Gg}$, for  $g=1,...,G$ -- recall that $A_G=G^{1/\alpha} L(G)$ for a slow varying $L(\cdot)$  is defined right before Assumption \ref{assm}. For each $G$, we can collect them into a countably long vector
	\begin{align*}
	Z^G=(	Z_{G1},...,	Z_{GG},0,0,...)\in \Real^\infty.
	\end{align*} 
	Similarly defined is the countably long sign vector
	\begin{align*}
	\epsilon^G=(\epsilon_{G1},...,\epsilon_{GG},1,1,...)\in \Real^\infty,
	\end{align*}
	where $\epsilon_{Gg}$ indicates the sign such that $\sum_{i=1}^{N_h} Y_{hi}=\epsilon_{Gg}W_{Gg}$ for the cluster $h$ that corresponds to the $g$-th order statistic $W_{Gg}$ for each $g=1,...,G$, $G\in \mathbb N$. By Lemmas 1 and 2 in \cite{lepage1981convergence}, we have
	\begin{align*}
	Z^G \stackrel{d}{\to}& Z=(Z_1,Z_2,...)\quad\text{and}\quad
	\epsilon^G \stackrel{d}{\to} \epsilon=(\epsilon_1,\epsilon_2,...),
	\end{align*}
	where $\{Z_k\}_k$ and $\{\epsilon_k\}$ are defined in the proof for Lemma \ref{theorem:main}. %{ {(YW: In the Proof of Theorem \ref{theorem:main}, the distribution of $\epsilon_k$ involves the constants $c_1$ and $c_2$. Now it should involve $q$, right?)
	%HDC: Yes sorry for the inconsistent notations. I have now unified them (see also the red part in the proof of Theorem 1). 	} } 
	In addition, since $\Real^\infty$ is a complete separable metric space under the metric
	\begin{align*}
	d((x_1,x_2,...),(y_1,y_2,...))=\sum_{k=1}^\infty\frac{1}{2^k}\cdot \frac{|x_k-y_k|}{1+|x_k-y_k|} ,
	\end{align*}
	following Skorohod's representation theorem, on an adequately chosen probability space, 
	\begin{align*}
	d(Z^G,Z)\to 0 \quad\text{and}\quad d(\epsilon^G,\epsilon)\to 0
	\end{align*}
	$P$-almost surely.
	Denote the countable vector of i.i.d. Rademacher random variables by $v^*=(v_1^*,v_2^*,...)\in\Real^\infty$, which is invariant of $G$.
	We now claim that the weak convergence
	\begin{align*}
	S_G^*=\sumg \epsilon_{Gg} Z_{Gg} v_g^*\stackrel{d^*}{\to}S^*:=\sum_{k=1}^\infty \epsilon_k Z_k v_k^*
	\end{align*}
	for $(Z,\epsilon)$ with $P$-probability one,
	% { {(YW: what is this $r$? HDC: Sorry that was my typo. Fixed.)}} 
	where the convergence in distribution $\stackrel{d^*}{\to}$ is with respect to $P^*$. Note that the limiting random variable on the right-hand side is well-defined since 
	%[$\Leftarrow$ WHAT DO YOU MEAN BY RV BEING ``WELL-DEFINED''? DOES THAT MEAN TWO MOMENTS ARE BOUNDED? 	{ HDC: By well-defined, I mean it is a proper random variable with $P$-probability one rather then some infinite diverging object. These moments being bounded $P$-a.s. is sufficient condition.}]
	\begin{align*}
	&\E^*\left[\epsilon_k Z_k v_k^*\right]=0 \text{ for all }k \text{ and } \\
	&\sum_{k=1}^\infty\E^*\left[(\epsilon_k Z_k v_k^*)^2\right]=\sum_{k=1}^\infty Z_k^2<\infty
	\end{align*}
	$P$-almost surely. 
	%{ {(YW: the expectation $\E^*$ is also w.r.t. only $P^*$, right? If so, the previous defintion of $\stackrel{d^*}{\to}$ is a little confusing. HDC: Yes you are correct. I am  following Knight's notation. I have modified the writing a little bit and hopefully this is more clear.  )}}
	The convergence in distribution is shown following the same arguments as in the proof of Theorem 2 in \cite{knight1989bootstrap} with i.i.d. Rademacher random variables $v_k^*$ in place of their centered i.i.d. Poisson random variables $(M_k^*-1)$.
	Specifically, 
	observe that $Z_k\to 0$ as $k\to \infty$ $P$-almost surely.
	%{ {(YW: almost surely HDC: Yes you are correct. It has been added)}} 
	Following Equation (12) in the proof of Theorem 1 in \cite{lepage1981convergence}, define $\mathcal Z\subset \Real^\infty$ be the subspace consists of countable sequences $z=(z_1,z_2,...)$ such that $z_1\ge z_2\ge ...\ge 0$ (note that $\mathcal Z$ is also a complete separable space with the inherited topology). Subsequently, for a fixed $\varepsilon>0$,  define
	$\phi:\mathcal Z \times \{-1,1\}^\infty\times \{-1,1\}^\infty$ by
	\begin{align*}
	\phi(z,\epsilon,v^*)=
	\begin{cases}
	\sum_{k=1}^\infty \epsilon_k z_k \1(z_k >\epsilon) v_k^* \quad \text{ if } z_k\to 0 \text{ as } k\to \infty,\\
	0\qquad\qquad\qquad\qquad\qquad \text{ otherwise.}
	\end{cases}
	\end{align*}  
	Then $\phi$ is a continuous mapping with respect to the product topology. Thus by the continuous mapping theorem as well as the convergences of $d(Z^G,Z)\to 0$ and $d(\epsilon^G,\epsilon)\to 0$ with $P$-probability one established earlier,
	for any $\varepsilon>0$, 
	\begin{align*}
	\sum_{g=1}^G \epsilon_{Gg}Z_{Gg}\1(Z_{Gg}>\varepsilon )v_{g}^* \stackrel{d^*}{\to} \sum_{k=1}^\infty \epsilon_{k}Z_{k}\1(Z_{k}>\varepsilon )v_{k}^*
	\end{align*}
for $(Z,\epsilon)$ with $P$-probability one.
	%	[$\Leftarrow$ I'M PROBABLY MISSING SOMETHING BASIC BUT I DO NOT SEE WHY THIS CONVERGENCE FOLLOWS FROM $Z_k \rightarrow 0$. {  HDC: Sorry it was not trivial. It is an application of the continuous mapping theorem for a complicated function. I have inserted  details above.}]
	In addition, note that
	\begin{align*}
	\E^*\left[\left( \sum_{g=1}^G \epsilon_{Gg}Z_{Gg}\1(Z_{Gg}\le\varepsilon )v_{g}^* \right)^2\right]= & \sum_{g=1}^G Z_{Gg}^2\1(Z_{Gg}\le\varepsilon )\Var^*(v_{k}^*)\le \sum_{k=1}^\infty Z_{k}^2\1(Z_{k}\le\varepsilon )
	\end{align*}
	holds almost surely in $P$ and the right-hand side converges to zero 
	as $\varepsilon\to 0$,
	% { (YW: also almost surely in $P$? HDC: Yes added)}
	which implies via Markov's inequality that, for any $\delta>0$,
	\begin{align*}
	\lim_{\varepsilon\to 0}\limsup_{G\to \infty}P^*\left( \left|\sum_{k=1}^\infty \epsilon_{Gk}Z_{Gk}\1(Z_{Gk}\le \varepsilon)v_k^*\right|> \delta\right)=0
	\end{align*}
	$P$-almost surely. Finally, for any $\delta>0$,
	\begin{align*}
	\lim_{\varepsilon\to 0}P^*\left( \left|\sum_{k=1}^\infty \epsilon_{k}Z_{k}\1(Z_{k}\le \varepsilon)v_k^*\right|> \delta\right)=0
	\end{align*}
	$P$-almost surely, which follows immediately from the fact that 
	\begin{align*}
	\E^*\left[ \left(\sum_{k=1}^\infty \epsilon_{k}Z_{k}\1(Z_{k}\le \varepsilon)v_k^*\right)^2 \right]=&\sum_{k=1}^\infty Z_{k}^2\1(Z_{k}\le \varepsilon)\to 0
	\end{align*}
	$P$-almost surely 
	%{ {(YW: should be $(v^*_k)^2=1$ HDC: Thanks!! It's fixed.)}}
	as $\varepsilon\to 0$. Combining these results yields that
	\begin{align*}
	S_G^*\stackrel{d^*}{\to} S^* = \sum_{k=1}^\infty \epsilon_kZ_{k}v_k^*
	\end{align*}
	for $(Z,\epsilon)$ with $P$-probability one. 
	%{ (YW: This should be almost surely in both $P$ and $P^*$? HDC: Correct. But since convergence in distribution w.r.t. $P^*$  is our claim.)}
	%Now to show the weak convergence, we can show that for all bounded continuous real-valued functions $f$, it holds $P$-a.s. that
	%\begin{align*}
	%\E^*[f(S_G)]\to\E^*[f(S)]
	%\end{align*}
	%Then by continuity and boundedness, we can apply the bounded convergence theorem to show the above convergence. Since $Z_k\to 0$ as $k\to\infty$, it holds that
	On the other hand, recall from Step 1 in the proof of Lemma \ref{theorem:main} that
	\begin{align*}
	S_G=\sumg \epsilon_{Gg}Z_{Gg}\stackrel{d}{\to}S:=\sum_{k=1}^\infty \{\epsilon_k Z_k-(2p-1)\E[Z_k\1(Z_k\le 1)]\},
	\end{align*}
	by Theorem 1 in \cite{lepage1981convergence}. Note that $Z_k$, $\epsilon_k$, and $v_k^*$ are all mutually independent from each other.
	Therefore, the  limiting distribution of $S_G^*$ given $Y_{1:G}$, i.e. $S^*$ conditionally on $(Z,\epsilon)$, differs from, $S$, the limiting $\alpha$-stable distribution of $S_G$  with positive $P$-probability.% since the former has a finite conditional second moment  $P$-almost surely while the latter $\alpha$-stable distribution has infinite variance. 

	Next, to cope with the denominator term of $S_G^*$, note that, combined with the law of large numbers, the above weak convergence of $S_G^*$ also implies
	\begin{align*}
	\hat \theta^*=&\frac{1}{N}\sumg\epsilon_{Gg} W_{Gg} v_g^* \\
	=&\frac{1}{c+o_p(1)}\cdot \frac{1}{G}\sumg  \epsilon_{Gg} W_{Gg} v_g^*  \\
	=&\underbrace{\frac{1}{c+o_p(1)}}_{=O_p(1)}\cdot \underbrace{\frac{A_G}{G}}_{=\frac{L(G)}{G^{1-1/\alpha}}}\cdot \underbrace{\sumg  \epsilon_{Gg} Z_{Gg}  v_g^*}_{=O_p(1)}
	=o_p(1).
	\end{align*}
	Thus, the denominator term, $(V_G^*)^{1/2}$, of $S_G^*$ turns out to be asymptotically independent of the auxiliary Rademacher random variables  $v_g^*$:
	\begin{align*}
	V_G^*=\frac{1}{A_G^2}\sumg \left(v_g^*\sumi (Y_{gi}+o_p(1))\right)^2=	\sumg Z_{Gg}^2+o_p(1).
	\end{align*}
	%{{ (YW: should this mode of convergence be $d^*$? HDC: Yes I should write it that way. I have modified the writing and hopefully it is more clear now)}} 
	Given $Y_{1:G}$, the denominator is asymptotically constant. Following Step 1 in the proof of Lemma \ref{theorem:main}, we have
	\begin{align*}
	V_G = \sumg Z_{Gg}^2 +o_p(1)\stackrel{d}{\to } \sum_{k=1}^\infty Z_k^2=O_p(1) .
	\end{align*}
	Thus, given $Y_{1:G}$, the denominator term  $(V_G^*)^{1/2}$ is a fixed value, while the original limit of the denominator is an $(\alpha/2)$-stable, non-degenerate continuous distribution. Hence, the limiting distribution of $V_G^*$ given $Y_{1:G}$ and the unconditional limiting distribution of $V_G$ differs with non-zero $P$-probability.

	Finally, note that $V_G^*>0$ $P$-almost surely. Thus, the fact that	
 $$(S_G^*,V_G^*)\stackrel{d^*}{\to} \left(\sum_{k=1}^\infty \epsilon_kZ_{k}v_k^*,\sum_{k=1}^\infty Z_k^2\right)$$ for almost every $(Z,\epsilon)$ and the continuous mapping theorem yield that
 \begin{align*}
 T_G^*\stackrel{d^*}{\to }\frac{\sum_{k=1}^\infty \epsilon_kZ_{k}v_k^*}{\sqrt{\sum_{k=1}^\infty Z_k^2}}
 \end{align*}
for $(Z,\epsilon)$ with $P$-probability one. This, together with the unconditional limiting distribution of $T_G$ implies 
  the conclusion that the unconditional limiting distribution of $T_G$ and the conditional limiting distribution of $T_G^*$ differs with positive $P$-probability. The inconsistency then follows.
\end{proof}

\subsection{Proof of Theorem \ref{theorem:uniformity}}\label{sec:proof:theorem:uniformity}
%%%%%%%%%%%%%%%%%%%%%%%%%%%%
\begin{proof}[Proof of Theorem \ref{theorem:uniformity}]
	Let us first introduce the following lemma.
	\begin{lemma}[Weak convergence of triangular arrays]\label{lemma:weak_conv_varying_alpha_p}
		For any sequence of $P_G\in \mathbf P(\varepsilon)$ such that $\alpha_G\to \alpha_0\in [1+\varepsilon, 2]$ and $p_G\to p_0\in[0,1]$ as $G\to\infty$, we have
		\begin{align*}
			R_{1G}\stackrel{d}{\to }\mathbb S_{\alpha_0,p_0}.
		\end{align*} 
	\end{lemma}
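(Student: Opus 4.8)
The plan is to route the proof through the point-process method for sums in the domain of attraction of a stable law, adapted to a row-wise i.i.d. triangular array whose tail index $\alpha_G$ and tail-balance $p_G$ drift to their limits. Writing the self-normalized statistic $R_{1G}$, which in the present scalar-mean setting is $\left(\sum_{g=1}^G S_g\right)\big/\sqrt{\sum_{g=1}^G \widehat S_g^2}$, I would first dispose of the estimated-residual correction exactly as in Step 1 of the proof of Lemma \ref{theorem:main}: since $\widehat S_g = S_g - (\widehat\delta-\delta)$, one has $\sum_g \widehat S_g^2 = \sum_g S_g^2 - G(\widehat\delta-\delta)^2$, and the subtracted term is $O_p(1/G)$ relative to $\sum_g S_g^2$, hence asymptotically negligible. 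This reduces the claim to the joint weak convergence of the pair $\big(A_G^{-1}\sum_g S_g,\ A_G^{-2}\sum_g S_g^2\big)$ to the LePage pair $(\mathcal S,\mathcal V)$ indexed by $(\alpha_0,p_0)$, as in \eqref{eq:rates_1}, after which the continuous-mapping theorem applied to $(s,v)\mapsto s/\sqrt v$ on $\{v>0\}$ (a $\Prob$-one event, since $\mathcal V=\sum_k Z_k^2>0$) delivers $R_{1G}\stackrel{d}{\to}\mathbb S_{\alpha_0,p_0}=\mathcal S/\sqrt{\mathcal V}$.

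For the core case $\alpha_0\in[1+\varepsilon,2)$ I would choose the norming $A_G$ so that $G\,P_G(|S_g|>A_G)\to 1$ and study the point process $N_G=\sum_{g=1}^G\delta_{S_g/A_G}$ on $\R\setminus\{0\}$. By the standard convergence criterion for row-wise i.i.d. arrays (cf. \citealp{resnick1987extreme}), $N_G$ converges to the Poisson random measure with the $(\alpha_0,p_0)$-stable mean measure provided the mean measures converge vaguely, that is $G\,P_G(S_g>A_G x)\to p_0 x^{-\alpha_0}$ and $G\,P_G(S_g<-A_G x)\to(1-p_0)x^{-\alpha_0}$ for every $x>0$. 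Here I would write $G\,P_G(|S_g|>A_G x)=\big[G\,P_G(|S_g|>A_G)\big]\,x^{-\alpha_G}\,L_{P_G}(A_Gx)/L_{P_G}(A_G)$, read off $x^{-\alpha_G}\to x^{-\alpha_0}$ and the bracket $\to1$, and combine with the balance factor $P_G(S_g>t)/P_G(|S_g|>t)\to p_G\to p_0$ to obtain the two one-sided limits. Joint convergence of numerator and denominator then follows by expressing each as a summation functional of $N_G$, truncating at level $\eta$, applying continuous mapping to the $\{|x|>\eta\}$ part, and controlling the discarded part: the centered small-jump contribution to the numerator vanishes as $\eta\downarrow0$ because $\alpha_0>1$, while $\sum_{|x_i|\le\eta}x_i^2$ has mean of order $\eta^{2-\alpha_0}\to0$ because $\alpha_0<2$, both uniformly in $G$.

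The boundary case $\alpha_0=2$ must be treated separately, since there the stable L\'evy measure disappears and the point-process limit degenerates, with $\mathbb S_{2,p_0}$ equal to the standard normal irrespective of $p_0$. For sequences $P_G\in\mathbf P_2$ the uniform integrability condition in the definition of $\mathbf P_2$ is a Lindeberg-type condition, and the self-normalized central limit theorem of \cite{gine1997student} gives $R_{1G}\stackrel{d}{\to}\mathcal N(0,1)$. For sequences in $\mathbf P_1(\varepsilon)$ with $\alpha_G\uparrow2$---each having infinite variance yet a normal limit---I would instead verify the Gin\'e--G\"otze--Mason domain-of-attraction-of-the-normal condition directly along the sequence, showing that $x\mapsto\E_{P_G}[S_g^2\1(|S_g|\le x)]$ is asymptotically slowly varying and that the single-summand ratio $\max_{g\le G}S_g^2/\sum_{g\le G}S_g^2$ is $o_p(1)$, again using $\alpha_G\to2$ together with control of the tails.

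I expect the two genuinely hard points to be the following. First, the \emph{uniform} slowly varying control: the requirement $L_{P_G}(A_Gx)/L_{P_G}(A_G)\to1$ when both the function $L_{P_G}$ and the argument $A_G\to\infty$ vary with $G$ does not follow from the single-distribution uniform convergence theorem for regularly varying functions, and instead needs a Potter-type bound that is uniform over the relevant subfamily of $\mathbf P(\varepsilon)$. Second, the seamless bridging at $\alpha_0=2$, where one must reconcile the non-normal point-process regime with the Gaussian regime so that $(\alpha_0,p_0)\mapsto\mathbb S_{\alpha_0,p_0}$ is continuous up to and including the boundary; this continuity is precisely what feeds the high-level conditions of \cite{romano2012uniform} in the proof of Theorem \ref{theorem:uniformity}.
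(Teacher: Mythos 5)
Your route is genuinely different from the paper's, but it stalls at exactly the step you flag as hard, and that gap cannot be filled under the paper's assumptions. Your argument hinges on vague convergence of the mean measures, $G\,P_G(|S_g|>A_Gx)\to x^{-\alpha_0}$, which you reduce to $x^{-\alpha_G}\to x^{-\alpha_0}$ together with $L_{P_G}(A_Gx)/L_{P_G}(A_G)\to 1$, deferring the latter to ``a Potter-type bound that is uniform over the relevant subfamily of $\mathbf P(\varepsilon)$.'' No such bound is available: membership in $\mathbf P_1(\varepsilon)$ constrains each $L_P$ only in its own $t\to\infty$ limit, with no uniformity across $P$. Concretely, fix $\alpha\in[1+\varepsilon,2)$, let $h$ be continuous and non-increasing with $h\equiv 1$ on $(0,1]$ and $h\equiv\tfrac12$ on $[2,\infty)$, and let $P_G$ be the symmetric distribution with
\begin{align*}
P_G\left(|S_g|>t\right)=t^{-\alpha}\,h\left(tG^{-1/\alpha}\right),\qquad t\ge 1.
\end{align*}
Every $P_G$ lies in $\mathbf P_1(\varepsilon)$ (its slowly varying factor is eventually constant), with $\alpha_G\equiv\alpha$ and $p_G\equiv\tfrac12$, so the hypotheses of Lemma \ref{lemma:weak_conv_varying_alpha_p} hold trivially. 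Yet with your normalization $A_G=G^{1/\alpha}$ one has $L_{P_G}(2A_G)/L_{P_G}(A_G)=h(2)/h(1)=\tfrac12$ for every $G$, and in fact $G\,P_G(|S_g|>A_Gx)=x^{-\alpha}h(x)$ exactly: your point processes converge to a Poisson random measure whose intensity carries the ``kink'' $h$ and is not the stable intensity, nor any rescaling of it (rescaling being the only thing self-normalization absorbs). So the continuous-mapping step then produces a limit built from this kinked random measure, not the LePage series defining $\mathbb S_{\alpha_0,p_0}$; the convergence you need to ``read off'' is provably false for admissible sequences. The same objection hits your boundary treatment of sequences in $\mathbf P_1(\varepsilon)$ with $\alpha_G\uparrow 2$: verifying the Gin\'e--G\"otze--Mason conditions ``along the sequence'' again requires controlling $P_G$ at the moving scale $A_G$, which the class definition does not permit.

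The paper's proof is structured precisely so that no array-level regular-variation control is ever needed, and this is the essential difference between the two arguments. The paper freezes the row distribution: for each $M$ it forms $X_{Mn}$, the self-normalized sum of $n$ i.i.d.\ draws from $P_M$, and applies the fixed-distribution results of \citet{lepage1981convergence} to get $X_{Mn}\stackrel{d}{\to}\mathbb S_{\alpha_M,p_M}$ as $n\to\infty$; it then proves continuity of $(\alpha,p)\mapsto\mathbb S_{\alpha,p}$ --- via characteristic functions for $\alpha_0<2$ and, in place of your Gin\'e--G\"otze--Mason step, via the density convergence $f_{\alpha,p}\to\varphi$ from \citet{logan1973limit} for $\alpha_0=2$ --- and finally transfers the iterated limits to the diagonal $Y_n=X_{nn}\stackrel{d}{=}R_{1n}$ by Skorohod representation and the second converging-together theorem (Lemma \ref{lemma:second_conv_together}). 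All dependence on the changing row distribution is thereby concentrated in the approximation condition $\lim_{M\to\infty}\limsup_{n\to\infty}P(|X_{Mn}-Y_n|\ge\varepsilon')=0$ of that lemma. You should note that the construction above puts pressure on that condition too, since iterated almost-sure limits do not control a diagonal without uniformity in $M$; this confirms that the difficulty you isolated is substantive rather than technical, but it does not rescue your proposal, whose central convergence step fails as stated. Finally, a minor notational point: $R_{1G}$ is normalized by the true $\sum_g S_g^2$; the statistic with estimated scores is $R_{2G}$, and the paper passes between them in the proof of Theorem \ref{theorem:uniformity} via the identity $R_{2G}=R_{1G}\left(G/(G-R_{1G}^2)\right)^{1/2}$, so your opening reduction, while correct, belongs there rather than in this lemma.
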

	Its proof is presented in the end of this section.

Now, to show the statement of Theorem \ref{theorem:uniformity}, we shall derive the asymptotic distribution for the following self-normalized sums of $ S_g$:
\begin{align}
R_{1G}:=\frac{\sum_{g=1}^G  S_g}{\sqrt{\sum_{g=1}^G  S_g^2}}\quad\text{ and }\quad
R_{2G}:=\frac{\hat \delta - \delta}{\hat \sigma}=\frac{\sum_{g=1}^G  S_g}{\sqrt{\sum_{g=1}^G  \hat S_g^2}}.%\label{eq:SN_sums}
\end{align}
Following Eq (1.3) in \cite{logan1973limit}, we obtain
\begin{align*}
R_{2G}=	R_{1G}\left(\frac{G}{G-	R_{1G}^2}\right)^{1/2}.
\end{align*}
Thus, by Lemma \ref{lemma:weak_conv_varying_alpha_p}, the limiting distribution of $R_{2G}$ coincides with the one of $R_{1G}$.

%% Yulong commented out the following two sentences. 
%Let us first introduce an important theoretical result that will be crucial for establishing the uniformity. 
%The following auxiliary lemma shows a weak convergence result for triangular arrays consisting of $P_G\in \mathbf P$. 

%%%%%%%%%%%%%%%%%%%%%%%%%%%%%%%%%%%%%%%%%%%%%
The proof follows a similar structure to the one for Theorem 3.1 in \cite{romano2012uniform}.
We will apply our Lemma \ref{lemma:high-level_uniformity} in Appendix \ref{sec:auxiliary_Lemmas} with 
\begin{align*}
	R_G=&\frac{\hat \delta - \delta}{\hat \sigma} \quad\text{ and }\quad
	\hat R_b=\frac{\hat \delta_{b,j} - \hat\delta}{\hat \sigma_{b,j}}.
\end{align*}
First, we verify
\begin{align}
	\sup_{P\in\mathbf P}\sup_{x\in\mathbb R}|J_b(x,P)-J_G(x,P)|\to 0\label{eq:unif_CDF_conv}
\end{align}
as $b,G\to \infty$ with $b/G=o(1)$.
By way of contradiction, assume that it fails. Then, there exists a subsequence $G_l$ and some $(\alpha,p)\in [1+\varepsilon,2]\times [0,1]$ such that either
\begin{align*}
	\sup_{x\in\mathbb R}|J_{b_{G_l}}(x,P_{G_l})- F_{\alpha,p}(x)|\not\to 0\quad
	\text{or}\quad	\sup_{x\in\mathbb R}|J_{G_l}(x,P_{G_l})-F_{\alpha,p}(x)|\not\to 0.
\end{align*}
Recall that $\mathbb S_{\alpha,p}\sim F_{\alpha,p}$ has a continuous distribution (almost everywhere). 
Yet, either of these would violate Lemma \ref{lemma:weak_conv_varying_alpha_p}. Thus Condition (\ref{eq:unif_CDF_conv}) must hold.

We will next verify
the condition that
\begin{align*}
	\sup_{P\in \mathbf P}P\left(\sup_{x\in \Real}\left|\hat L_G(x)-L_G(x,P)\right|>\varepsilon'\right)=o(1)
\end{align*}
for all $\varepsilon'>0$.
Consider any sequence $\{P_G\in \mathbf P:G \ge 1\}$. For any $\eta>0$, we have
\begin{align*}
	&\sup_{x\in\mathbb R}\{\hat L_G(x)-L_G(x,P_G)\}\\
	\le &
	\sup_{x\in\mathbb R}\{\hat L_G(x)-L_G(x+\eta,P_G)\}+
	\sup_{x\in\mathbb R}\{ L_G(x+\eta,P_G)-L_G(x,P_G)\}\\
	\le&
	\sup_{x\in\mathbb R}\{\hat L_G(x)-L_G(x+\eta,P_G)\}+
	\sup_{x\in\mathbb R}\{ L_G(x+\eta,P_G)-L_b(x+\eta,P_G)\}\\
	+&\sup_{x\in\mathbb R}\{ L_b(x,P_G)-L_G(x,P_G)\}
	+
	\sup_{x\in\mathbb R}\{ L_b(x+\eta,P_G)-L_b(x,P_G)\}\\
	=&(i)+(ii)+(iii)+(iv).
\end{align*}
Note that $(ii)$ and $(iii)$ are both $o_{P_G}(1)$ by Lemma 4.5 in \cite{romano2012uniform}. Furthermore, $(iv)$ converges to zero as $\eta\to 0$. 

Finally, we will verify $(i)= o_{P_G}(1)$ as $\eta\to 0$.  By considering a subsequence, if necessary, one may assume without loss of generality that $P_G$ is such that $\alpha_G\to \alpha$ and $p_G\to p$. The proof for this statement utilizes an argument similar to those taken in Theorem 11.3.1 in \cite{politis1999subsampling}. By its definition,
\begin{align*}
	\hat L_G(x)=&\frac{1}{B_G}\sum_{j=1}^{B_G}\1\left\{\frac{\hat \delta_{b,j} - \hat \delta}{\hat \sigma_{b,j}}\le x\right\}\\
	\le&\frac{1}{B_G}\sum_{j=1}^{B_G}\1\left\{\frac{\hat \delta_{b,j} -  \delta}{\hat \sigma_{b,j}}\le x+ \frac{\hat \delta - \delta}{\hat \sigma_{b,j}}\right\}\\
	\le&
	\frac{1}{B_G}\sum_{j=1}^{B_G}\1\left\{\frac{\hat \delta_{b,j} -  \delta}{\hat \sigma_{b,j}}\le x+\eta\right\} + (1-R_G(\eta)),
\end{align*}
where $R_G(\eta)$ is defined for $\eta>0$ as
\begin{align*}
	R_G(\eta)=&\frac{1}{B_G}\sum_{j=1}^{B_G}\1\left\{\frac{\hat \delta - \delta}{\hat \sigma_{b,j}}\le \eta\right\}\\
	=&\frac{1}{B_G}\sum_{j=1}^{B_G}\1\left\{ 
	(b/A_b) \hat \sigma_{b,j}\ge (b/A_b)  (\hat \delta -\delta)/\eta
	\right\},
\end{align*}
$A_b=b^{1/\alpha}L(b)$ for some slow varying $L$ at infinity. As $A_G/A_b\to 0$, for any $\varepsilon''>0$, it holds that $(b/A_b)(\hat \delta -\delta)\le \varepsilon''$ with probability approaching one along $P_G$. This is because $\hat \delta$ is the full sample estimator and thus $(G/A_G)(\hat\delta - \delta)=O_{P_G}(1)$ follows from the proof of Lemma \ref{lemma:weak_conv_varying_alpha_p}. As such, following the proof of Lemma \ref{lemma:weak_conv_varying_alpha_p}, we have
\begin{align*}
	R_G(\eta) \ge \frac{1}{B_G}\sum_{j=1}^{B_G} \1\left\{(b/A_b)\hat \sigma_{b,j}\ge \varepsilon''/\eta\right\}\stackrel{P_G}{\to} P_G(V \ge\varepsilon''/\eta)
\end{align*}
as $G\to \infty$,
where $V$ is the stable distribution with index of stability of $\alpha/2$. By Theorem 1$'$ in \cite{lepage1981convergence} for example, $V$ has the representation
$
V=\sum_{k=1}^\infty Z_k^2,
$ where
$Z_k=(E_1+...+E_k)^{-1/\alpha}$ for each $k$, $\{E_k\}_k$ are i.i.d. standard exponential random variables.
As $\varepsilon''$ can be arbitrarily small, we have $R_G(\eta)= 1 +o_{P_G}(1)$. Thus, we have
\begin{align*}
	\hat L_G(x)
	\le&
	\frac{1}{B_G}\sum_{j=1}^{B_G}\1\left\{\frac{\hat \delta_{b,j} -  \delta}{\hat \sigma_{b,j}}\le x+\eta\right\} + (1-R_G(\eta))\\
	\le&
	L_G(x+\eta,P_G)+o_{P_G}(1).
\end{align*}
A similar argument derives $\hat L_G(x) \ge L_G(x+\eta,P_G)+o_{P_G}(1)$. This shows $(i)=o_{P_G}(1)$ as $\eta\to 0$, and hence concludes the proof of Theorem \ref{theorem:uniformity} for the case of subsampling. 

Finally, under the regime of $b^2/G\to 0$, the result follows from the asymptotic equivalence between subsampling and the $m$-out-of-$n$ bootstrap under the regime $m^{2}/n \to 0$; see Corollary 2.3.1 in \cite{politis1999subsampling}.
\end{proof}
%%%%%%%%%%%%%%%%%%%%%%%%%5
\begin{proof}[Proof of Lemma \ref{lemma:weak_conv_varying_alpha_p}]
	First, consider the case of $ \alpha_0<2$.
	Denote $S_g=S_g(\alpha,p)$ to emphasize the dependence of the DGP on the index $\alpha$ of stability and the tail balancing parameter $p$. (It does not suggest that the DGP is uniquely defined by these two parameters.) For each DGP, $P_G \in \{P_G:G\ge 1\}\subset \mathbf P_1(\varepsilon)$, with indices $(\alpha_m,p_m)$ for an auxiliary index $m=G$,
	define
	\begin{align*}
	X_{mn}=\frac{\sum_{g=1}^n S_g(\alpha_m,p_m)}{\sqrt{\sum_{g=1}^n S_g^2(\alpha_m,p_m)}}
	\end{align*}
	for each $n\ge 1$.
	Since $(\alpha_m,p_m)$ is fixed over $n$ for each $m$, we can apply Theorem 1$'$ in \cite{lepage1981convergence} to obtain that, for each $m$ as $n\to \infty$, there exists some positive sequence $A_{mn}\to \infty$ such that
	\begin{align*}
	&\left(\frac{1}{A_{mn}}\sum_{g=1}^n S_g(\alpha_m,p_m),\frac{1}{A_{mn}^2}\sum_{g=1}^n S_g^2(\alpha_m,p_m) \right)\\
	&\stackrel{d}{\to }\left(\sum_{k=1}^\infty\{\epsilon_k(p_m) Z_k(\alpha_m) -(2p_m-1)\E[Z_k(\alpha_m)\1(Z_k(\alpha_m) < 1)]\},\sum_{k=1}^\infty Z_k^2(\alpha_m)\right)=(S_m,V_m)
	\end{align*}
	as $n\to\infty$,
	where  $Z_k(\alpha_m)=(E_1+...+E_k)^{-1/\alpha_m}$ for each $k$, $\{E_k\}_k$ are i.i.d. standard exponential random variables, and $\{\epsilon_k(p_m)\}_k$ are i.i.d. random variables that take the value of $1$ with probability $p_m$ and $-1$ with probability $(1-p_m)$ and are independent of $\{Z_k(\alpha_m)\}_k$. Note that the distributions of both $S_m$ and $V_m$ are stable with indices of stability of $\alpha_m$ and $\alpha_m/2$, respectively.
	Furthermore, it follows from Corollary 1 in \cite{lepage1981convergence} that
	\begin{align*}
	&X_{mn}\stackrel{d}{\to} X_m\stackrel{d}{=}\frac{\sum_{k=1}^\infty\{\epsilon_k(p_m) Z_k(\alpha_m) -(2p_m-1)\E[Z_k(\alpha_m)\1(Z_k(\alpha_m) < 1)]\}}{\sqrt{\sum_{k=1}^\infty Z_k^2(\alpha_m)}}.
	\end{align*}
	Let the limiting distribution on the right-hand side be denoted by $\mathbb S_{\alpha_m,p_m}$.
	Also, note that $(\alpha_m,p_m)\to (\alpha_0,p_0)$ by our construction, and thus, $$X_m\stackrel{d}{\to} X\sim \mathbb S_{\alpha_0,p_0}$$ 
	follows from the convergence of the sequence of the characteristic functions of the stable distributions $S_m$ and $V_m$, as these characteristic functions are continuous in $(\alpha,p)$ over $(1,2)\times [0,1]$ (cf. Remark 4 on page 7 in \citealp{samorodnitsky1994stable}) and $V_m$ is positive with probability one for all $\alpha\in (1,2)$. 
	
	Next, by invoking the Skorohod's representation theorem (as $\mathbb R$ is a separable metric space), there exist versions of $X_{mn}$ and $X_m$ such that $X_{mn}\stackrel{a.s.}{\to} X_m$ for each $m$ and as $n\to \infty$, and $X_{m}\stackrel{a.s.}{\to} X$ as $m\to \infty$.
	Now, for such $X_{mn}$, define $Y_n=X_{nn}$. 	By construction, we have $Y_n\overset{d}{=}R_{1n}$ for all $n\ge 1$. 
	Also, it follows from the almost sure converges that 
	\begin{align*}
	\lim_{M\to\infty}\limsup_{n\to \infty} P(|X_{mn}-Y_n|\ge \varepsilon)=0
	\end{align*}
	for all $\varepsilon>0$.
	Applying Lemma \ref{lemma:second_conv_together} in Appendix \ref{sec:auxiliary_Lemmas}, we have
	$
	Y_n\stackrel{d}{\to }X
	$
	as $n\to \infty$.
	Thus, we conclude
	$
	R_{1n}\stackrel{d}{\to }X.
	$

	Now, consider the case of $\alpha_0=2$. We only need to consider the case where we have $\alpha_G<2$ for at least one $G$, as, otherwise, $\alpha_G=2$ for all $G$ and 
	\begin{align*}
	R_{1G}\stackrel{d}{\to } \mathcal{N}(0,1)
	\end{align*}
	follows immediately from the Lindeberg-Feller CLT. Now, for those $\alpha_m<2$, construct $ X_{mn}$ as in the previous case. By Corollary in \cite{lepage1981convergence}, we have
	\begin{align*}
	X_{mn}=\frac{\sum_{g=1}^n S_g(\alpha_m,p_m)}{\sqrt{\sum_{g=1}^n S_g^2(\alpha_m,p_m)}}\stackrel{d}{\to} X_m\sim \mathbb S_{\alpha_m,p_m}.
	\end{align*}
	By Assertion (vi) in Section 5 and Equation (5.13) in \cite{logan1973limit}, the density $f_{\alpha_m,p_m}(\cdot)$ of $\mathbb S_{\alpha_m,p_m}$ exists and is bounded everywhere except on a set with measure zero, and, as $\alpha_m\to 2$, $f_{\alpha_m,p_m}\to \varphi$, the standard normal density, on the real line. Thus, by the bounded convergence theorem, the CDF $F_{\alpha_m,p_m}(x)$ of $\mathbb S_{\alpha_m,p_m}$ converges to the standard normal distribution function $\Phi(x)$ for all $x\in \Real$, i.e. $X_m\stackrel{d}{\to}X \sim \mathcal{N}(0,1)$. Using the same construction of $Y_n$ as above, we conclude $R_{1n}\stackrel{d}{\to } \mathcal{N}(0,1)$ by Lemma \ref{lemma:second_conv_together} in Appendix \ref{sec:auxiliary_Lemmas}.
\end{proof}

% \end{comment}

%%%%%%%%%%%%%%%%%%%%%%%%%%%%%%%%%%
\section{Auxiliary Lemmas from the Literature}\label{sec:auxiliary_Lemmas}

The following lemma restates Theorems 2.1 and 2.2 as well as Remark 2.1 in \cite{romano2012uniform} for convenience of reference.
%%%%%%%%%%%%%%%%%%%%%%%%%%%%%%%%%%%%%%%%%%%%%%%%%%%%%%%%%%%%%%%%%%%%%
\begin{lemma}[High-level uniformity]\label{lemma:high-level_uniformity}
For subsampling and 
	under the setup in Section \ref{sec:theory:uniform},
	\begin{align*}
	\lim_{G\to \infty}\sup_{P\in\mathbf P}\sup_{x\in \Real}|J_b(x,P)-J_G(x,P)|=0,
	\end{align*}
	implies
	\begin{align*}
	\liminf_{G\to \infty}\inf_{P\in \mathbf P}P\left(L_G^{-1}(a_1,P)\le R_G \le  L_G^{-1}(1-a_2,P)\right)\ge 1-a_1-a_2
	\end{align*}
	for any nonnegative $a_1$ and $a_2$ such that $0\le a_1+a_2<1$. In addition, if $J_G(x,P)$
	tends in distribution to a limiting distribution $J(x,P)$ that is continuous, then
	\begin{align*}
	\lim_{G\to \infty}\inf_{P\in \mathbf P}P\left(L_G^{-1}(a_1,P)\le R_G \le  L_G^{-1}(1-a_2,P)\right)= 1-a_1-a_2.
	\end{align*}
	% Theorem 2.3 in RS12AOS
	Finally, if 
	\begin{align*}
	\sup_{P\in \mathbf P}P\left(\sup_{x\in \Real}\left|\hat L_G(x)-L_G(x,P)\right|>\varepsilon\right)=o(1)
	\end{align*}
	for all $\varepsilon>0$, then
	\begin{align*}
	\lim_{G\to \infty}\inf_{P\in \mathbf P}P\left(\hat L_G^{-1}(a_1)\le  R_G \le  \hat L_G^{-1}(1-a_2)\right)= 1-a_1-a_2.
	\end{align*}
\end{lemma}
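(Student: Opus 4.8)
The plan is to recognize this lemma as a direct transcription of the general subsampling validity results of \cite{romano2012uniform} into the present notation, so that the proof consists of matching the two setups and then invoking their Theorems 2.1 and 2.2 together with Remark 2.1. First I would check that the ingredients of the statement — an i.i.d. sample $X^{(G)}$ drawn from some $P$ ranging over the class $\mathbf P$, a real-valued root $R_G(X^{(G)},P)$ with exact sampling CDF $J_G(\cdot,P)$, a subsample size $b=b_G$ satisfying $b=o(G)$, the $B_G={G\choose b}$ subsample roots, and the subsampling CDF $L_G(\cdot,P)=B_G^{-1}\sum_{j}\1\{R_b(X^{G,(b),j},P)\le \cdot\}$ — coincide with the objects to which their theorems apply. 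The only structural hypothesis their results require is the \emph{uniform} closeness of the subsample and full-sample laws, $\lim_G\sup_{P\in\mathbf P}\sup_{x\in\Real}|J_b(x,P)-J_G(x,P)|=0$, which is exactly the antecedent assumed here.

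Granting this identification, the three conclusions follow in turn. The liminf coverage bound is their Theorem 2.1: its engine is the uniform law of large numbers for the subsample empirical distribution (their Lemma 4.5), which gives $\sup_{P\in\mathbf P}P(\sup_x|L_G(x,P)-J_b(x,P)|>\varepsilon)\to 0$; combining this with the assumed $J_b$-to-$J_G$ closeness and inverting at the levels $a_1$ and $1-a_2$ yields $\liminf_G\inf_{P}P(L_G^{-1}(a_1,P)\le R_G\le L_G^{-1}(1-a_2,P))\ge 1-a_1-a_2$. The upgrade to equality under a continuous limit law $J(\cdot,P)$ is their Theorem 2.2: continuity removes the slack produced by possible jumps of the limiting CDF at the relevant quantiles, turning the inequality into the identity $1-a_1-a_2$. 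Finally, the feasible version replaces $R_b$ by the estimated root $\hat R_b$ and hence $L_G$ by $\hat L_G$; by Remark 2.1, the extra hypothesis $\sup_{P}P(\sup_x|\hat L_G(x)-L_G(x,P)|>\varepsilon)=o(1)$ lets one substitute $\hat L_G$ for $L_G$ through a triangle inequality at the level of CDFs, preserving the equality with the estimated quantiles $\hat L_G^{-1}(a_1)$ and $\hat L_G^{-1}(1-a_2)$.

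Since every conclusion is imported verbatim from \cite{romano2012uniform}, the genuine content of the proof is bookkeeping: confirming that their regularity conditions are met by the abstract root-and-class structure stated here, noting that at this level of generality no moment or compactness conditions on $\mathbf P$ are imposed (these enter only when the lemma is later applied). The step I would expect to absorb most of the effort, were one to write a self-contained argument rather than a citation, is the uniform concentration of $L_G$ around $J_b$ over the entire class $\mathbf P$ (their Lemma 4.5). This is where $b=o(G)$ is used, via a variance/Hoeffding bound for the U-statistic-type average $L_G$ whose summand indicators share the common mean $J_b(x,P)$; the delicacy is that the bound must hold uniformly in $P$, which is precisely what makes the downstream coverage statements uniform. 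Because we are permitted to assume the results of \cite{romano2012uniform}, I would simply cite Lemma 4.5, Theorems 2.1--2.2, and Remark 2.1 rather than reprove them.
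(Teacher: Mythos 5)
Your proposal is correct and takes exactly the same route as the paper: the paper offers no self-contained proof of this lemma, stating only that it ``restates Theorems 2.1 and 2.2 as well as Remark 2.1 in \cite{romano2012uniform} for convenience of reference,'' which is precisely your citation-and-bookkeeping strategy (your identification of the three conclusions with Theorem 2.1, Theorem 2.2, and Remark 2.1, respectively, is the correct matching). Your additional remarks on the internal machinery of \cite{romano2012uniform} (their Lemma 4.5 and the role of $b=o(G)$) go beyond what the paper records here but are consistent with how the paper later uses that machinery in its proof of Theorem \ref{theorem:uniformity}.
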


\medskip
%%%%%%%%%%%%%%%%%%%%%%%%%%%%%%%%%%%%%%%
The next result is identical to Theorem 3.5 in \cite{resnick2008extreme}.
\begin{lemma}[Second converging together theorem]\label{lemma:second_conv_together}
	Suppose that $\{X_{mn},X_m,X,Y_n: n\ge 1, m\ge 1\}$ are random elements of the metric space $(\mathbb S,\mathcal S)$  with a metric $d(\cdot,\cdot)$ that are defined on a common domain. Assume that for each $m$, as $n\to \infty$,
	$
	X_{mn}\leadsto X_m,
	$
	and as $m\to \infty$,
	$
	X_m\leadsto X,
	$
	Further suppose that for all $\varepsilon>0$,
	\begin{align*}
	\lim_{m\to\infty}\limsup_{n\to \infty} P(d(X_{mn},Y_n)\ge \varepsilon)=0.
	\end{align*}
	Then, as $n\to \infty$, we have
	$
	Y_n\leadsto X,
	$
	where $\leadsto$ denotes weak convergence. 
\end{lemma}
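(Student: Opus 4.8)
The plan is to prove $Y_n \leadsto X$ by testing against bounded, uniformly continuous functions and routing the comparison through the intermediate arrays $X_{Mn}$ and $X_M$ via a triangle inequality. By the portmanteau theorem on a metric space, it suffices to show that $\E[f(Y_n)] \to \E[f(X)]$ for every bounded, uniformly continuous $f:\mathbb S \to \Real$; this reduction is what lets us dispense with metrizing weak convergence. Fixing such an $f$ with $\|f\|_\infty \le C$, I would decompose
\begin{align*}
|\E[f(Y_n)] - \E[f(X)]| \le |\E[f(Y_n)] - \E[f(X_{Mn})]| + |\E[f(X_{Mn})] - \E[f(X_M)]| + |\E[f(X_M)] - \E[f(X)]|.
\end{align*}

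First I would bound the leading term using the closeness-in-probability hypothesis. Given $\varepsilon>0$, uniform continuity of $f$ furnishes a $\delta>0$ such that $d(x,y)<\delta$ implies $|f(x)-f(y)|<\varepsilon$; splitting the expectation over the events $\{d(X_{Mn},Y_n)<\delta\}$ and its complement yields
\begin{align*}
|\E[f(Y_n)] - \E[f(X_{Mn})]| \le \E\,|f(Y_n) - f(X_{Mn})| \le \varepsilon + 2C\, P\big(d(X_{Mn},Y_n)\ge \delta\big).
\end{align*}
The middle term $|\E[f(X_{Mn})] - \E[f(X_M)]|$ tends to $0$ as $n\to\infty$ for each fixed $M$ by $X_{Mn}\leadsto X_M$, while the final term $|\E[f(X_M)] - \E[f(X)]|$ does not depend on $n$. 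Taking $\limsup_{n\to\infty}$ with $M$ held fixed therefore gives
\begin{align*}
\limsup_{n\to\infty}|\E[f(Y_n)] - \E[f(X)]| \le \varepsilon + 2C\,\limsup_{n\to\infty} P\big(d(X_{Mn},Y_n)\ge \delta\big) + |\E[f(X_M)] - \E[f(X)]|.
\end{align*}

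Next I would send $M\to\infty$: the last summand vanishes by $X_M\leadsto X$ (here using that $f$ is bounded and continuous), and the middle summand vanishes because $\lim_{M\to\infty}\limsup_{n\to\infty}P(d(X_{Mn},Y_n)\ge\delta)=0$ is precisely the uniform-negligibility hypothesis. This leaves $\limsup_{n\to\infty}|\E[f(Y_n)] - \E[f(X)]|\le\varepsilon$, and as $\varepsilon>0$ is arbitrary the limsup is $0$, giving $\E[f(Y_n)]\to\E[f(X)]$ and hence $Y_n\leadsto X$.

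The main obstacle is not any sharp estimate but the bookkeeping of the iterated limit: the negligibility hypothesis controls $P(d(X_{Mn},Y_n)\ge\delta)$ only in the order $\lim_M\limsup_n$, so the argument must commit to taking $\limsup_{n}$ first, with $M$ and $\delta$ frozen, and only afterwards send $M\to\infty$ and $\varepsilon\to 0$; reversing these limits would destroy the control. On the separable space $\mathbb S=\Real$ relevant to the application one could equivalently phrase the whole argument through the bounded-Lipschitz metric, which metrizes weak convergence and turns the three-term split into a literal triangle inequality, but the test-function version above avoids even that and is all that Lemma \ref{lemma:weak_conv_varying_alpha_p} requires.
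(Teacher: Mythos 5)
Your proof is correct, and there is nothing in the paper to compare it against: the paper states this lemma without proof, quoting it verbatim as Theorem 3.5 of Resnick (2008). Your argument---reducing weak convergence to convergence of $\E[f(Y_n)]$ for bounded, uniformly continuous $f$ via the portmanteau theorem, bounding $|\E[f(Y_n)]-\E[f(X_{Mn})]|$ by $\varepsilon + 2C\,P(d(X_{Mn},Y_n)\ge\delta)$ using uniform continuity, and then taking limits in the order $\limsup_{n}$ first (with $M$, $\delta$ frozen), then $M\to\infty$, then $\varepsilon\to 0$---is the standard textbook proof of the second converging together theorem, and your handling of the iterated limits, which is the only delicate point, is exactly right. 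In effect you have supplied the proof that the paper outsources to the cited reference.
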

}
%%%%%%%%%%%%
\newpage
\bibliographystyle{ecta}
\bibliography{biblio}
%%%%%%%%%%%%
\end{document}